\RequirePackage{amsmath}
\documentclass[runningheads,envcountsect,envcountsame]{llncs}
\usepackage{todonotes}
\usepackage[T1]{fontenc}
\usepackage[table]{xcolor}
\usepackage[noadjust]{cite}
\usepackage[ruled,linesnumbered,noend]{algorithm2e}
\usepackage{orcidlink,amssymb,enumitem,calc,mdframed,multirow,booktabs}

\usepackage{tikz}
\usetikzlibrary{automata,arrows.meta,shapes.geometric,calc,fit}
\tikzset{->,>=Stealth,every state/.style={thick}}

\newcommand{\game}{{\mathcal{G}}}
\newcommand{\sq}{{\square}}
\newcommand{\tri}{{\triangle}}
\newcommand{\EvenState}{{S_\sq}}
\newcommand{\OddState}{{S_\tri}}
\newcommand{\EvenStrategy}{{\sigma_\sq}}
\newcommand{\OddStrategy}{{\sigma_\tri}}
\newcommand{\strategies}{{\mathrm{\Sigma}}}
\newcommand{\lfunc}{{\mathrm{\Psi}}}
\newcommand{\bellman}{{\mathrm{\Phi}}}
\newcommand{\init}{{s_\mathrm{init}}}
\newcommand{\bvi}{{\mathrm{\Theta}}}

\newcommand{\dist}{{\textup{\textsc{Dist}}}}
\newcommand{\av}{{\textup{\textsc{Av}}}}
\newcommand{\post}{{\textup{\textsc{Post}}}}
\newcommand{\EvenStrategies}{{\textup{\textsc{Str}}_\sq}}
\newcommand{\OddStrategies}{{\textup{\textsc{Str}}_\tri}}
\newcommand{\infi}{{\textup{\textsc{Inf}}}}
\newcommand{\parity}{{\textup{\textsc{Parity}}}}
\newcommand{\attr}{{\textup{\textsc{Attr}}}}
\newcommand{\exit}{{\textup{\textsc{Exit}}}}

\newcommand{\mypar}[1]{\noindent\textbf{#1}}
\newcommand{\myparskip}[1]{\medskip\noindent\textbf{#1}}
\spnewtheorem{mynotation}[theorem]{Notation}{\itshape}{\mdseries}
\spnewtheorem{mydefinition}[theorem]{Definition}{\bfseries}{\mdseries}
\spnewtheorem{myproblem}[theorem]{Problem}{\bfseries}{\mdseries}

\begin{document}

\title{Value Iteration for Stochastic Parity Games}
\author{Kittiphon Phalakarn\inst{1}\orcidlink{0009-0006-5406-7480} \and
Ichiro Hasuo\inst{1,2,3}\orcidlink{0000-0002-8300-4650}}
\authorrunning{K. Phalakarn and I. Hasuo}
\institute{National Institute of Informatics, Tokyo, Japan\\\email{\{kphalakarn,hasuo\}@nii.ac.jp} \and
SOKENDAI (The Graduate University for Advanced Studies), Kanagawa, Japan \and
Imiron Co., Ltd., Tokyo, Japan}
\maketitle

\begin{abstract}
We present the first \emph{(bounded) value iteration} algorithm for the quantitative analysis of \emph{stochastic parity games}, a fundamental model for probabilistic verification with $\omega$-regular objectives. Existing algorithms are based on \emph{strategy iteration}, which repeatedly computes optimal strategies for one player while fixing the other, leading to high computational cost. Our value iteration algorithm instead operates directly on a lattice-theoretic characterization of winning probabilities, exploiting structural properties of (almost-sure qualitative) winning states under parity objectives. We prove correctness and convergence of the proposed algorithm. 
\keywords{Stochastic parity game \and Bounded value iteration \and Probabilistic model checking \and Fixed point.}
\end{abstract}

\section{Introduction}\label{sec:intro}

\mypar{Stochastic Parity Games.} Stochastic parity games (SPGs) are zero-sum, non-terminating games played on probabilistic transition systems between two players, Even (denoted by $\sq$) and Odd (denoted by $\tri$).\footnote{The symbol $\sq$ has an even number of sides, while $\tri$ has an odd number.} Each state is assigned a natural number, called a \emph{priority}. At each step, the player controlling the current state selects an available \emph{action}, after which the game moves to a successor state according to a predefined probability distribution decided by the chosen action.

Under the \emph{parity objective}, an infinite path is winning for Even if the \emph{minimum} priority appearing infinitely often along the path is even; otherwise, the path is winning for Odd. Accordingly, each player aims to select actions so as to optimize their probability of winning.

Stochastic parity games play a central role in several areas of computer science. Prior theoretical work establishes that SPGs are determined, meaning that for every state the winning probabilities of the two players sum to one \cite{DBLP:journals/jcss/AlfaroM04,DBLP:journals/jsyml/Martin98}. Moreover, SPGs admit optimal strategies that are pure (deterministic) and memoryless (positional) \cite{DBLP:conf/soda/ChatterjeeJH04,DBLP:conf/fossacs/Zielonka04}. Consequently, the decision problem for SPGs lies in the complexity class $\mathbf{NP} \cap \mathbf{coNP}$.

In addition, solving SPGs is polynomial-time equivalent to solving stochastic reachability, mean-payoff, discounted-payoff, and terminal-payoff games \cite{DBLP:conf/isaac/AnderssonM09,DBLP:conf/concur/BerthonKZ25,DBLP:journals/ipl/ChatterjeeH08}. Due to their expressiveness, SPGs have numerous applications, including reasoning about cyber-physical systems, economic models, computer networks, software systems, and security \cite{DBLP:journals/ejcon/SvorenovaK16,DBLP:journals/access/TusharYSNASP23,DBLP:journals/arcras/MardenS18}.

\myparskip{Previous Works: Strategy Iteration.} Despite extensive theoretical results on SPGs, only a few algorithms are available for quantitatively solving SPGs, i.e., computing the winning probability from a given initial state. The first such approach was proposed by Chatterjee and Henzinger \cite{DBLP:conf/stacs/ChatterjeeH06}, followed by the other by Hahn et al. \cite{DBLP:conf/vmcai/HahnST017}.

Both approaches are based on the same underlying technique, namely \emph{strategy iteration}. The algorithm starts with an arbitrary strategy for one player, say Even. By fixing Even's strategy, the SPG reduces to a Markov decision process (MDP). An optimal strategy for the other player, Odd, can then be computed for this MDP using existing techniques, e.g., \cite{DBLP:phd/us/Alfaro97,DBLP:journals/tac/CourcoubetisY98}.

The resulting strategy for Odd is subsequently used to improve Even's strategy so as to increase Even's winning probabilities. The improved Even's strategy is then used in the next iteration, where Odd's optimal strategy is recomputed and Even's strategy is further refined. This process repeats until Even's strategy cannot be improved, at which point it is optimal and the corresponding winning probabilities are obtained.

Due to its nature, strategy iteration can be computationally expensive, as it requires constructing optimal strategies in each iteration. Indeed, Hahn et al.~\cite{DBLP:conf/vmcai/HahnST017} reported that a significant portion of the overall runtime is spent on repeatedly solving MDPs.

\myparskip{Bounded Value Iteration.} Another widely used technique for computing winning probabilities is \emph{value iteration} \cite{DBLP:conf/spin/ChatterjeeH08}. This technique has been applied to solve parity games on Markov chains and MDPs, but, to the best of our knowledge, not SPGs. In practical settings where approximate probabilities suffice, value iteration is often preferred due to its favorable performance characteristics \cite{DBLP:conf/tacas/HartmannsJQW23,revised}.

Value iteration is based on a lattice-theoretic fixed-point characterization of winning probabilities. The algorithm initializes a value for each state and repeatedly updates these values using local update rules. Owing to this local nature of the value updates, value iteration is generally more efficient in practice than global strategy-based methods.

For stochastic reachability games (also known as simple stochastic games \cite{DBLP:journals/iandc/Condon92}), several variants of value iteration have been proposed \cite{DBLP:conf/cav/KelmendiKKW18,DBLP:journals/iandc/EisentrautKKW22,DBLP:conf/lics/KretinskyMW23,DBLP:conf/cav/MeggendorferW24,DBLP:conf/cav/PhalakarnTHH20,DBLP:conf/atva/PhalakarnTH25,DBLP:conf/atva/AzeemEKSW22,DBLP:conf/gandalf/AzeemKW25}. To provide precision guarantees, these algorithms compute both lower and upper approximations of the winning probabilities, a technique commonly referred to as \emph{bounded value iteration (BVI)}.

\myparskip{This Work: The First BVI for SPGs.} In this work, we present the first bounded value iteration algorithm for the quantitative computation of winning probabilities in stochastic parity games. The main challenge stems from the fact that these winning probabilities do not correspond to the least or greatest fixed points of the one-step Bellman operator (cf.~Definition~\ref{def:bellman}), in contrast to simpler objectives such as reachability or safety. Consequently, value updates based solely on local reasoning are insufficient for deriving converging lower and upper bounds.

To address this issue, we incorporate global (qualitative) information, namely the (almost-sure qualitative) winning regions of each player under the parity objective in suitably defined subgames. These subgames evolve dynamically with the current lower and upper approximations. Under an optimal play, Even avoids remaining in states that are almost-sure winning for Odd, and vice versa. Our value update operators are designed to capture this strategic behavior by adjusting values in accordance with these global considerations.


Although the structure of our algorithm may resemble that of~\cite[Algorithm~2]{DBLP:conf/lics/KretinskyMW23}, ours is based on fundamentally different principles and do not rely on any assumptions required in that work. Specifically, \cite{DBLP:conf/lics/KretinskyMW23} requires that the winning probabilities correspond to the least or greatest fixed point of the one-step Bellman operator, whereas ours makes no such assumption; moreover, their approach relies on reasoning over MDPs, while ours works directly on SPGs. More details are given in Section~\ref{subsec:lics23}.

\myparskip{Contributions.} Our main technical contribution is the first BVI algorithm for SPGs called \textsc{SPG-BVI} (Algorithm~\ref{alg:ec-spg-bvi}). We establish its correctness and convergence, and provide a detailed illustrative example of its execution.

\myparskip{Organization.} The remainder of the paper is structured as follows. Section~\ref{sec:prelim} introduces the necessary preliminaries, and Section~\ref{sec:related} surveys related work. In Section~\ref{sec:alg}, we present our \textsc{SPG-BVI} algorithm for SPGs together with proofs of correctness and convergence. 
Section~\ref{sec:conclusion} concludes the paper.

\section{Preliminaries}\label{sec:prelim}

\begin{mynotation}
For a finite set $S$, the set of all functions from $S$ to $[0,1]$ is denoted by $[0,1]^S$, and the set of all discrete probability distributions on $S$ is denoted by $\dist(S) := \{ d \in [0,1]^S : \sum_{s \in S} d(s) = 1 \}$.
\end{mynotation}

\subsection{Stochastic Parity Games}\label{subsec:game}

Our definition on stochastic parity games is as follows.\footnote{We do not introduce the player \emph{Random} as in \cite{DBLP:conf/csl/ChatterjeeJH03}. Nonetheless, our definition and that of \cite{DBLP:conf/csl/ChatterjeeJH03} are equivalent with respect to parity objectives.}

\begin{mydefinition}[stochastic parity game $\game$]\label{def:game}
A \emph{stochastic parity game (SPG)} is $\game = (S, \EvenState, \OddState, A, \av, \delta, \mathcal{P})$, where $S = \EvenState \uplus \OddState$ is a finite set of \emph{states} partitioned into \emph{Even's} $(\EvenState)$ and \emph{Odd's} $(\OddState)$ states, $A$ is a finite set of \emph{actions}, $\av : S \to 2^A \setminus \{\emptyset\}$ defines \emph{available} actions, $\delta : S \times A \to \dist(S)$ is a \emph{transition function}, and $\mathcal{P} : S \to \mathbb{N}$ is a \emph{priority function}.
\end{mydefinition}

The semantics of SPGs is defined as usual \cite{DBLP:conf/csl/ChatterjeeJH03,DBLP:books/daglib/0020348}. Let $\post(s,a) := \{ s' \in S : \delta(s,a,s') > 0 \}$. An infinite sequence $\rho = s_0 a_0 s_1 a_1 \ldots \in (S \times A)^\omega$ is an \emph{infinite path} through $\game$ if $a_i \in \av(s_i)$ and $s_{i+1} \in \post(s_i,a_i)$ for all $i \in \mathbb{N}$. A finite sequence $\rho = s_0 a_0 s_1 a_1 \ldots s_n \in (S \times A)^* \times S$ is a \emph{finite path} through $\game$ if it is a prefix of an infinite path.

A \emph{strategy $\EvenStrategy$ of Even} is a function $\EvenStrategy : \EvenState \to A$ such that $\EvenStrategy(s) \in \av(s)$ for all $s \in \EvenState$. A \emph{strategy $\OddStrategy$ of Odd} is defined analogously. We restrict our consideration to strategies of this form (i.e., pure memoryless strategies) as they are complete for finite SPGs with the parity objective \cite{DBLP:conf/soda/ChatterjeeJH04,DBLP:conf/fossacs/Zielonka04}. The sets of all strategies of Even and Odd are denoted by $\EvenStrategies$ and $\OddStrategies$, respectively.

A \emph{pair of strategies} is $\strategies = (\EvenStrategy,\OddStrategy) \in \EvenStrategies \times \OddStrategies$. For notational convenience, we also refer to $\strategies = (\EvenStrategy,\OddStrategy)$ as the function $\strategies : S \to A$ such that $\strategies(s) = \EvenStrategy(s)$ if $s \in \EvenState$ and $\strategies(s) = \OddStrategy(s)$ if $s \in \OddState$.

Given an SPG $\game$, fixing a pair $\strategies$ of strategies results in a Markov chain $\game^\strategies$ with a transition function $\delta^\strategies : S \to \dist(S)$ where $\delta^\strategies(s,s') = \delta(s,\strategies(s),s')$. Under $\strategies$, actions $a_i$ within a path can be omitted.

The probability $\mathbb{P}^\strategies$ of a finite path $\rho = s_0 s_1 \ldots s_n \in S^*$ is given by $\mathbb{P}^\strategies(\rho) := \prod_{i \in [0,n-1]} \delta^\strategies(s_i,s_{i+1})$. As described in \cite{DBLP:books/daglib/0020348}, the probability can be extended to all measurable sets of $S^\omega$. For a starting state $s \in S$ and a subset $S' \subseteq S$, we denote the probability of eventually reaching $S'$ from $s$ and that of always staying in $S'$ by $\mathbb{P}^\strategies_s(\textsf{F}\,S')$ and $\mathbb{P}^\strategies_s(\textsf{G}\,S')$, respectively.

\myparskip{Parity Objective.} For an infinite path $\rho = s_0 a_0 s_1 a_1 \ldots \in (S \times A)^\omega$, the set of states which are visited infinitely often in $\rho$ is denoted by $\infi(\rho) := \{ s \in S : \forall i \in \mathbb{N}, \exists j > i, s = s_j \}$. For $S' \subseteq S$, let $\mathcal{P}(S') := \{ \mathcal{P}(s') : s' \in S' \}$, where $\mathcal{P} : S \to \mathbb{N}$ is the priority function from Definition~\ref{def:game}.

The \emph{(Even) parity objective} requires that the \emph{minimum} priority visited infinitely often along an infinite path is even. That is, we are interested in the measurable set $\parity_\sq := \{ \textnormal{infinite path } \rho : \min(\mathcal{P}(\infi(\rho))) \textnormal{ is even} \}$. The \emph{Odd parity objective} and the measurable set $\parity_\tri$ are defined analogously. Throughout the paper, when a player is not specified, the parity objective refers to the \emph{Even} parity objective.

The goal of this work is to determine $\mathbb{P}^\strategies_s(\parity_\sq)$ under a pair of optimal strategies $\strategies = (\EvenStrategy,\OddStrategy)$, formally defined by the \emph{value function} $\mathcal{V} : S \to [0,1]$ as follows (the equality is proved in \cite{DBLP:conf/fossacs/Zielonka04}).

\begin{mydefinition}[value function $\mathcal{V}$]\label{def:valfunc}
Let $\game$ be an SPG as in Definition~\ref{def:game}. The \emph{value function} is the function $\mathcal{V} : S \to [0,1]$ defined by
\begin{align*}
\mathcal{V}(s) := &\max_{\EvenStrategy \in \EvenStrategies} \min_{\OddStrategy \in \OddStrategies} \mathbb{P}^\strategies_s(\parity_\sq)\\
= &\min_{\OddStrategy \in \OddStrategies} \max_{\EvenStrategy \in \EvenStrategies} \mathbb{P}^\strategies_s(\parity_\sq).
\end{align*}
\end{mydefinition}

\begin{figure}[t]
    \centering
    \scalebox{.789}{\begin{tikzpicture}[
    box/.style={anchor=south, minimum size=25pt, regular polygon, regular polygon sides=4, inner sep=0pt},
    tri/.style={anchor=south, minimum size=23pt, regular polygon, regular polygon sides=3, inner sep=0pt}
]
    \node [state, tri] at (  0, 0) (s00) {$\textbf{2}$};
    \node [state, box] at (1.5, 0) (s01) {$\textbf{3}$};
    \node [state, box] at (  3, 0) (s02) {$\textbf{4}$};
    \node [state, box] at (  0, 2) (s10) {$\textbf{0}$};
    \node [state, tri] at (1.5, 2) (s11) {$\textbf{1}$};
    \node [state, box] at (  3, 2) (s12) {$\textbf{2}$};

    \path (s00) edge[out=175, in=135, looseness=12, ultra thick] node[below=1.5pt, pos=0.2]{$\alpha$} (s00);
    \path (s01) edge[loop left, looseness=11, ultra thick] node[below, pos=0.1]{$\alpha$} (s01);
    \path (s02) edge[out=345, in=15, looseness=11, ultra thick] node[below, pos=0.11]{$\alpha$} (s02);
    \path (s02) edge node[below, pos=0.25]{$\beta$} (s01);
    \path ($(s10.0)+(0,0.5ex)$) edge node[above, pos=0.25]{$\alpha$} ($(s11.150)+(0.3ex,0.5ex)$);
    \path ($(s11.150)-(0.3ex,0.5ex)$) edge node[below, pos=0.35]{$\beta$} ($(s10.0)-(0,0.5ex)$);
    \path ($(s11.30)+(-0.3ex,0.5ex)$) edge[ultra thick] node[above, pos=0.2]{$\alpha$} ($(s12.180)+(0,0.5ex)$);
    \path ($(s12.180)-(0,0.5ex)$) edge node[below, pos=0.25]{$\alpha$} ($(s11.30)-(-0.3ex,0.5ex)$);
    \path (s10) edge[ultra thick] node[left, pos=0.2]{$\beta$} coordinate[pos=0.4](m0) node[left, pos=0.6]{\scriptsize$0.7$} (s00);
    \path (m0) edge[ultra thick] node[above, sloped, pos=0.22]{\scriptsize$0.3$} (s01.135);
    \path (s11) edge node[right, pos=0.2]{$\gamma$} coordinate[pos=0.4](m1) node[right, pos=0.6]{\scriptsize$0.1$} (s01);
    \path (m1) edge node[above, sloped, pos=0.22]{\scriptsize$0.9$} (s00);
    \path (s12) edge[ultra thick] node[right, pos=0.2]{$\beta$} coordinate[pos=0.4](m2) node[right, pos=0.6]{\scriptsize$0.2$} (s02);
    \path (m2) edge[ultra thick] node[above, sloped, pos=0.22]{\scriptsize$0.8$} (s01.45);
\end{tikzpicture}\begin{tikzpicture}[
    box/.style={anchor=south, minimum size=25pt, regular polygon, regular polygon sides=4, inner sep=0pt},
    tri/.style={anchor=south, minimum size=23pt, regular polygon, regular polygon sides=3, inner sep=0pt}
]
    \node [state, tri] at (  0, 0) (s00) {$\textbf{2}$};
    \node [state, box] at (1.5, 0) (s01) {$\textbf{3}$};
    \node [state, box] at (  3, 0) (s02) {$\textbf{4}$};
    \node [state, box] at (  0, 2) (s10) {$\textbf{0}$};
    \node [state, tri] at (1.5, 2) (s11) {$\textbf{1}$};
    \node [state, box] at (  3, 2) (s12) {$\textbf{2}$};

    \path (s00) edge[out=175, in=135, looseness=12] node[below=1.5pt, pos=0.2]{$\alpha$} (s00);
    \path (s01) edge[loop left, looseness=11] node[below, pos=0.1]{$\alpha$} (s01);
    \path (s02) edge[out=345, in=15, looseness=11] node[below, pos=0.11]{$\alpha$} (s02);
    \path ($(s10.0)+(0,0.5ex)$) edge node[above, pos=0.25]{$\alpha$} ($(s11.150)+(0.3ex,0.5ex)$);
    \path ($(s11.30)+(-0.3ex,0.5ex)$) edge node[above, pos=0.2]{$\alpha$} ($(s12.180)+(0,0.5ex)$);
    \path ($(s12.180)-(0,0.5ex)$) edge node[below, pos=0.25]{$\alpha$} ($(s11.30)-(-0.3ex,0.5ex)$);
\end{tikzpicture}\begin{tikzpicture}[
    box/.style={anchor=south, minimum size=25pt, regular polygon, regular polygon sides=4, inner sep=0pt},
    tri/.style={anchor=south, minimum size=23pt, regular polygon, regular polygon sides=3, inner sep=0pt}
]
    \node [state, tri] at (  0, 0) (s00) {$\textbf{2}$};
    \node [state, box] at (1.5, 0) (s01) {$\textbf{3}$};
    \node [state, box] at (  3, 0) (s02) {$\textbf{4}$};
    \node [state, box] at (  0, 2) (s10) {$\textbf{0}$};
    \node [state, tri] at (1.5, 2) (s11) {$\textbf{1}$};
    \node [state, box] at (  3, 2) (s12) {$\textbf{2}$};

    \path (s00) edge[out=175, in=135, looseness=12] node[below=1.5pt, pos=0.2]{$\alpha$} (s00);
    \path (s01) edge[loop left, looseness=11] node[below, pos=0.1]{$\alpha$} (s01);
    \path (s02) edge[out=345, in=15, looseness=11] node[below, pos=0.11]{$\alpha$} (s02);
    \path ($(s11.30)+(-0.3ex,0.5ex)$) edge node[above, pos=0.2]{$\alpha$} ($(s12.180)+(0,0.5ex)$);
    \path ($(s12.180)-(0,0.5ex)$) edge node[below, pos=0.25]{$\alpha$} ($(s11.30)-(-0.3ex,0.5ex)$);
    \path (s10) edge node[left, pos=0.2]{$\beta$} coordinate[pos=0.4](m0) node[left, pos=0.6]{\scriptsize$0.7$} (s00);
    \path (m0) edge node[above, sloped, pos=0.22]{\scriptsize$0.3$} (s01.135);
    \path (s12) edge node[right, pos=0.2]{$\beta$} coordinate[pos=0.4](m2) node[right, pos=0.6]{\scriptsize$0.2$} (s02);
    \path (m2) edge node[above, sloped, pos=0.22]{\scriptsize$0.8$} (s01.45);
\end{tikzpicture}}
    \caption{Left: an SPG, where the priority $\mathcal{P}(\cdot)$ is shown within each state. The labels $1$ for $\delta(s,a,s') = 1$ are omitted. We denote $\sq_i$ and $\tri_i$ for Even's and Odd's state with priority $i$, respectively. Optimal strategies are in bold. Middle and right: the subgames under $\av'$ in iteration $i = 2$ and $3$, respectively, following an execution of Algorithm~\ref{alg:ec-spg-bvi}.}
    \label{fig:spg}
\end{figure}

\begin{example}
Figure~\ref{fig:spg} (left) illustrates an example of an SPG, where Even's states and Odd's states are shown as $\sq$ and $\tri$, respectively. We denote $\sq_i$ and $\tri_i$ for Even's and Odd's state with priority $i$, respectively. Optimal strategies are in bold. Its value function is $\mathcal{V} : \tri_2,\sq_4 \mapsto 1 \mid \sq_0 \mapsto 0.7 \mid \tri_1,\sq_2 \mapsto 0.2 \mid \sq_3 \mapsto 0$.
\end{example}

\mypar{End Component.} The notion of \emph{end component} \cite{DBLP:books/daglib/0020348} is first introduced in the context of MDPs and is then extended (e.g., as in \cite{DBLP:phd/us/Alfaro97,DBLP:conf/cav/KelmendiKKW18,DBLP:journals/iandc/EisentrautKKW22,DBLP:conf/lics/KretinskyMW23}) for stochastic games. We state the definition as follows.

\begin{mydefinition}[end component $E = (S_E,\textnormal{\av}_E)$]\label{def:ec}
Let $\game$ be an SPG as in Definition~\ref{def:game}. A pair $E = (S_E,\av_E)$, where $\emptyset \neq S_E \subseteq S$ and $\av_E : S_E \to 2^A \setminus \{\emptyset\}$ such that $\av_E(s) \subseteq \av(s)$ for all $s \in S_E$, is an \emph{end component (EC)} if
\begin{enumerate}[label=(\roman*)]
    \item $\forall s \in S_E, \forall a \in \av_E(s), \post(s,a) \subseteq S_E$, and
    \item $\forall s',s'' \in S_E$, there exists a finite path $\rho = s_0 a_0 s_1 a_1 \ldots s_n \in (S_E \times A)^* \times S_E$ for some $n \in \mathbb{N}$ such that $s_0 = s'$, $s_n = s''$, and $a_i \in \av_E(s_i)$ for all $0 \leq i < n$.
\end{enumerate}

Let $S' \subseteq S$, $\av' : S \to 2^A \setminus \{\emptyset\}$, and $E = (S_E, \av_E)$ be an EC. We say $E$ is \emph{within $(S',\av')$} if $S_E \subseteq S'$ and $\av_E(s) \subseteq \av'(s)$ for all $s \in S_E$. We say $E$ is \emph{maximal within $(S',\av')$} if $E$ is within $(S',\av')$ and there does not exist an EC $E' = (S_{E'}, \av_{E'})$ within $(S',\av')$ such that (i) $S_E \subsetneq S_{E'}$, or (ii) $\av_{E}(s) \subsetneq \av_{E'}(s)$ for some $s \in S_E$. We say $E$ is \emph{bottom within $(S',\av')$} if $E$ is within $(S',\av')$ and $\post(s,a) \subseteq S' \implies \post(s,a) \subseteq S_E$ for all $s \in S_E$ and $a \in \av'(s)$. We denote the set of all bottom maximal ECs (BMECs) within $(S',\av')$ by $\textsc{BMEC}(S',\av')$.
\end{mydefinition}

Given $S' \subseteq S$ and $\av' : S \to 2^A \setminus \{\emptyset\}$, the set $\textsc{BMEC}(S',\av')$ can be computed in polynomial time in the size of $S'$ and $\av'$ \cite{DBLP:books/daglib/0020348}.

\begin{example}
Consider the SPG in Figure~\ref{fig:spg} (left). A pair $E_1 = (S_{E_1},\av_{E_1})$, where $S_{E_1} = \{\sq_0,\tri_1\}$ and $\av_{E_1} : \sq_0 \mapsto \{\alpha\} \mid \tri_1 \mapsto \{\beta\}$, is an EC. The EC $E_1$ is \emph{not} maximal within $(S,\av)$ since there exists an EC $E_2 = (S_{E_2}, \av_{E_2})$, where $S_{E_2} = \{\sq_0,\tri_1,\sq_2\}$ and $\av_{E_2}: \sq_0 \mapsto \{\alpha\} \mid \tri_1 \mapsto \{\alpha,\beta\} \mid \sq_2 \mapsto \{\alpha\}$, and $S_{E_1} \subsetneq S_{E_2}$. The EC $E_2$ is maximal within $(S,\av)$. However, $E_2$ is not bottom within $(S,\av)$ since $\post(\sq_0,\beta) \not\subseteq S_{E_2}$. An EC $E_3 = (S_{E_3},\av_{E_3})$, where $S_{E_3} = \{\tri_2\}$ and $\av_{E_3}: \tri_2 \mapsto \{\alpha\}$, is bottom and maximal within $(S,\av)$. In contrast, a pair $E_4 = (S_{E_4}, \av_{E_4})$, where $S_{E_4} = \{\sq_3,\sq_4\}$ and $\av_{E_4} : \sq_3 \mapsto \{\alpha\} \mid \sq_4 \mapsto \{\alpha,\beta\}$, is \emph{not} an EC, as there is no finite path from $\sq_3$ to $\sq_4$ inside $E_4$.
\end{example}

\subsection{Fixed Points in a Complete Lattice}\label{subsec:lattice}

Various quantitative probabilistic model checking problems, such as computing reachability probabilities, can be formulated as computations of suitable fixed points in suitable complete lattices \cite{DBLP:books/daglib/0020348}. We provide brief backgrounds on complete lattices and relevant results.

For our setting, we are interested in the complete lattice $(L := [0,1]^S, \leq)$ where, for $f,f' \in L$, we write $f \leq f'$ if and only if $f(s) \leq f'(s)$ for all $s \in S$. Its least element is $\bot : S \to [0,1]$ with $\bot(s) = 0$ and its greatest element is $\top : S \to [0,1]$ with $\top(s) = 1$ for all $s \in S$.

A function $\lfunc : L \to L$ is \emph{monotone} if $f \leq f'$ implies $\lfunc f \leq \lfunc f'$. A lattice element $f \in L$ is a \emph{fixed point} of $\lfunc$ if $f = \lfunc f$. The \emph{least fixed point} of $\lfunc$, denoted by $\mu\lfunc$, is a fixed point of $\lfunc$ satisfying, for any $f \in L$, $f = \lfunc f \implies \mu\lfunc \leq f$. The \emph{greatest fixed point} of $\lfunc$, denoted by $\nu\lfunc$, is defined in a similar manner.

Below is a notable theorem regarding the least and greatest fixed points.

\begin{theorem}[(special case of) Kleene \cite{DBLP:journals/dm/Baranga91}]\label{thm:kleene}
Let $(L, \leq)$ be a complete lattice and $\lfunc : L \to L$ be an $\omega$-continuous function (i.e., $\lfunc(\sup L') = \sup\{ \lfunc f : f \in L' \}$ for any increasing $\omega$-chain $L' \subseteq L$). Then, the \emph{bottom-up Kleene sequence} $\bot \leq \lfunc\bot \leq \lfunc^2\bot \leq \cdots$ stabilizes at $\omega$ such that $\mu\lfunc = \lfunc^\omega\bot = \lfunc^{\omega+1}\bot = \cdots\;$.\footnote{For the bottom-up sequence, we define $\lfunc^\omega\bot := \sup\{\lfunc^i\bot : i \in \mathbb{N}\}$ and $\lfunc^{\omega+(i+1)}\bot := \lfunc(\lfunc^{\omega+i}\bot)$ for each $i \in \mathbb{N}$. Those for the top-down sequence are defined dually.}

Dually, let $\lfunc$ be an $\omega^\textnormal{op}$-cocontinuous function (i.e., $\lfunc(\inf L') = \inf\{ \lfunc f : f \in L' \}$ for any decreasing $\omega$-chain $L' \subseteq L$). Then, the \emph{top-down Kleene sequence} $\top \geq \lfunc\top \geq \lfunc^2\top \geq \cdots$ stabilizes at $\omega$ such that $\nu\lfunc = \lfunc^\omega\top = \lfunc^{\omega+1}\top = \cdots\;$.
\end{theorem}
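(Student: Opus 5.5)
The plan is to prove the bottom-up half directly and obtain the top-down half by duality, that is, by applying the same argument to the reversed order $\geq$ on $L$. Here $\top$ becomes the least element, infima of decreasing chains become suprema of increasing chains, and $\omega^\textnormal{op}$-cocontinuity becomes $\omega$-continuity.

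For the bottom-up half, I will first show that an $\omega$-continuous $\lfunc$ is monotone. Given $f \leq f'$, the sequence $f, f', f', \ldots$ is an increasing $\omega$-chain with supremum $f'$. Continuity then gives $\lfunc f' = \sup\{\lfunc f, \lfunc f'\}$, hence $\lfunc f \leq \lfunc f'$.

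Next, I will show by induction that the Kleene sequence is increasing. The base case $\bot \leq \lfunc\bot$ holds trivially. For the step, if $\lfunc^i\bot \leq \lfunc^{i+1}\bot$, then monotonicity gives $\lfunc^{i+1}\bot \leq \lfunc^{i+2}\bot$. So $L' = \{\lfunc^i\bot : i\in\mathbb{N}\}$ is an increasing $\omega$-chain. Its supremum $\lfunc^\omega\bot$ exists because $L$ is complete; concretely, in $[0,1]^S$ it is the pointwise supremum. Applying continuity,
\[
\lfunc(\lfunc^\omega\bot) = \sup\{\lfunc^{i+1}\bot : i \in \mathbb{N}\} = \lfunc^\omega\bot .
\]
The second equality holds because dropping the first element $\bot$ of the chain does not change the supremum. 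Hence $\lfunc^\omega\bot$ is a fixed point, and by the footnote's definition $\lfunc^{\omega+1}\bot = \lfunc^\omega\bot$. A trivial induction then shows $\lfunc^{\omega+i}\bot = \lfunc^\omega\bot$ for every $i$.

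For leastness, let $f = \lfunc f$ be any fixed point. By induction, $\lfunc^i\bot \leq f$ for all $i$: the base case is $\bot \leq f$, and the step is $\lfunc^{i+1}\bot \leq \lfunc f = f$ by monotonicity. Taking the supremum gives $\lfunc^\omega\bot \leq f$, so $\lfunc^\omega\bot = \mu\lfunc$.

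No step is a real obstacle, since this is the standard Kleene argument. The two points needing care are deriving monotonicity from continuity, which is done via the eventually-constant chain above, and the index shift in the supremum when verifying the fixed-point equation.
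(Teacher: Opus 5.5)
Your proof is correct and complete. There is no in-paper proof to compare against: the paper quotes this as a known special case of Kleene's fixed point theorem and cites the literature. What you give is the standard self-contained argument.

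The step you flagged, deriving monotonicity of $\lfunc$ from $\omega$-continuity via the eventually constant chain $f, f', f', \ldots$, is where the reading of the definition matters. It requires that ``increasing $\omega$-chain'' admit eventually constant chains, which are finite as subsets of $L$. That reading is forced anyway. If only infinite chains counted, the swap $0 \mapsto 1$, $1 \mapsto 0$ on the two-element lattice would be vacuously $\omega$-continuous, yet it has no fixed point at all.

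The remaining steps are all sound:
\begin{itemize}
\item the index shift $\sup_i \lfunc^{i+1}\bot = \sup_i \lfunc^{i}\bot$, which uses that $\bot$ is the least element;
\item leastness, by induction against an arbitrary fixed point;
\item the top-down half, by passing to the reversed order. There $\top$ becomes least, infima of decreasing chains become suprema of increasing ones, $\omega^\textnormal{op}$-cocontinuity becomes $\omega$-continuity, and $\nu\lfunc$ becomes the least fixed point.
\end{itemize}
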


\mypar{Value Iteration.} When a problem is formulated as the computation of the least fixed point $\mu\lfunc$ of $\lfunc$, one can apply Theorem~\ref{thm:kleene} to approximate this least fixed point: starting from $\bot$ and iteratively applying $\lfunc$ to the current lattice element. This procedure is commonly known as \emph{value iteration}. Note that $\lfunc^i\bot \leq \mu\lfunc$ holds for all $i \in \mathbb{N}$.

Since it could happen that $\lfunc^i\bot < \lfunc^\omega\bot = \mu\lfunc$ for all $i \in \mathbb{N}$, it may not be possible to precisely compute $\mu\lfunc$ via value iteration. Nevertheless, in practice, it is usually sufficient to compute an \emph{approximation} that is $\varepsilon$-close to $\mu\lfunc$ for some desired precision $\varepsilon > 0$. The rest of Section~\ref{subsec:lattice} targets at this problem.

A straightforward approach is to stop the iteration when $\max_{s \in S}((\lfunc^i\bot)(s)-(\lfunc^{i-1}\bot)(s)) \leq \varepsilon$. However, it is shown in \cite{DBLP:journals/tcs/HaddadM18} that such $\lfunc^i\bot$ is not guaranteed to be $\varepsilon$-close to $\mu\lfunc$. To our best knowledge, the bottom-up Kleene sequence alone is not sufficient to provide an $\varepsilon$-close approximation of $\mu\lfunc$.

\myparskip{Bounded Value Iteration (BVI).} When $\mu\lfunc = \nu\lfunc$ (i.e., when $\lfunc$ has a \emph{unique} fixed point), one can perform value iteration from both $\bot$ and $\top$. This procedure is known as \emph{bounded value iteration} (also called \emph{interval iteration}). Note that $\lfunc^i\top \geq \nu\lfunc$ holds for all $i \in \mathbb{N}$. During this bounded iteration, when $\max_{s \in S}((\lfunc^i\top)(s)-(\lfunc^i\bot)(s)) \leq 2\varepsilon$, one can guarantee that $\frac{1}{2}(\lfunc^i\bot+\lfunc^i\top)$ is $\varepsilon$-close to $\mu\lfunc = \nu\lfunc$.

We point out that not all functions $\lfunc$ have a unique fixed point. In the case where we would like to compute $\mu\lfunc$ (or $\nu\lfunc$), it may be possible to construct $\lfunc'$ with a unique fixed point such that $\mu\lfunc'=\nu\lfunc'=\mu\lfunc$ (or $\nu\lfunc$), and when possible, the construction may not be straightforward. Interested readers are referred to, e.g., \cite{DBLP:phd/us/Alfaro97,DBLP:books/daglib/0020348,DBLP:conf/atva/PhalakarnTH25} for further details.

\subsection{Bellman Operator}\label{subsec:bellman}

Building on Section~\ref{subsec:lattice}, we now review the \emph{Bellman operator}, a key function used in computing value functions for a range of quantitative objectives. Our definition is based on a notion of \emph{state-action expectation} as follows.

\begin{mydefinition}[state-action expectation $\phi_f(s,a)$]\label{def:sa-exp}
Let $\game$ be an SPG as in Definition~\ref{def:game}. The \emph{state-action expectation} of a function $f \in [0,1]^S$ for a state-action pair $(s,a) \in S \times A$ is $\phi_f(s,a) := \sum_{s' \in S} \delta(s,a,s')\cdot f(s')$.
\end{mydefinition}

\begin{mydefinition}[Bellman operator $\bellman$]\label{def:bellman}
Let $\game$ be an SPG as in Definition~\ref{def:game}. The \emph{Bellman operator} is the function $\bellman : [0,1]^S \to [0,1]^S$ where, for $f \in [0,1]^S$ and $s \in S$,
$$
(\bellman f)(s) :=
\begin{cases}
    \textstyle\max_{a \in \av(s)} \phi_f(s,a) & \textnormal{if } s \in \EvenState \textnormal{, and}\\
    \textstyle\makebox[\widthof{$\max_{a \in \av(s)} \phi_f(s,a)$}][r]{$\min_{a \in \av(s)} \phi_f(s,a)$} & \textnormal{if } s \in \OddState.
\end{cases}
$$
\end{mydefinition}

It is well-known that (i) $\bellman$ is monotone, and (ii) the value function $\mathcal{V}$ (cf.~Definition~\ref{def:valfunc}) is a fixed point of the Bellman operator $\bellman$ (i.e., $\mathcal{V} = \bellman\mathcal{V}$). However, in general, $\mathcal{V}$ does not coincide with either the least or the greatest fixed point of $\bellman$ (i.e., $\mathcal{V} \neq \mu\bellman$ and $\mathcal{V} \neq \nu\bellman$).

For the reachability and safety objectives, we consider the following modification of the Bellman operator.

\begin{mydefinition}[modified Bellman operator $\bellman_{(S',c)}$]\label{def:mod-bellman}
Let $\game$ be an SPG as in Definition~\ref{def:game}, $S' \subseteq S$, and $c \in [0,1]$. The \emph{modified Bellman operator} is the function $\bellman_{(S',c)} : [0,1]^S \to [0,1]^S$ where, for $f \in [0,1]^S$ and $s \in S$,
$$
(\bellman_{(S',c)} f)(s) :=
\begin{cases}
    c & \textnormal{if } s \in S' \textnormal{,}\\
    \textstyle\max_{a \in \av(s)} \phi_f(s,a) & \textnormal{if } s \in \EvenState \setminus S' \textnormal{, and}\\
    \textstyle\makebox[\widthof{$\max_{a \in \av(s)} \phi_f(s,a)$}][r]{$\min_{a \in \av(s)} \phi_f(s,a)$} & \textnormal{if } s \in \OddState \setminus S'.
\end{cases}
$$
\end{mydefinition}

The \emph{optimal} reachability probability $\mathbb{P}^\strategies_s(\mathsf{F}\,S')$ under an optimal pair of strategies $\strategies$ is easily shown to be the least fixed point of $\bellman_{(S',1)}$, i.e., $\mathbb{P}^\strategies_s(\mathsf{F}\,S') = \mu\bellman_{(S',1)}$. Dually, the \emph{optimal} safety probability $\mathbb{P}^\strategies_s(\mathsf{G}\,S')$ under an optimal pair of strategies $\strategies$ is easily shown to be the greatest fixed point of $\bellman_{(S \setminus S',0)}$, i.e., $\mathbb{P}^\strategies_s(\mathsf{G}\,S') = \nu\bellman_{(S \setminus S',0)}$.

\section{Related Work}\label{sec:related}

\subsection{Solving Qualitative SPGs}\label{subsec:qual-prob}

As our algorithm for solving the quantitative SPG problem relies on solving its qualitative counterpart, we review the relevant literature below.

\begin{mydefinition}[winning sets $W_\sq,W_\tri$]\label{def:winset}
Let $\game$ be an SPG as in Definition~\ref{def:game} and $\mathcal{V}$ be the value function as in Definition~\ref{def:valfunc}. The \emph{(almost-sure qualitative) winning set of Even} is the set $W_\sq := \{ s \in S : \mathcal{V}(s) = 1 \}$. The \emph{(almost-sure qualitative) winning set of Odd} is the set $W_\tri := \{ s \in S : \mathcal{V}(s) = 0 \}$.
\end{mydefinition}

\begin{myproblem}[qualitative SPG problem]\label{prob:qual-prob}
Given an SPG $\game$, the \emph{qualitative SPG problem} is to output the winning set $W_\sq$ of Even.
\end{myproblem}

Note that the winning set $W_\tri$ of Odd for an SPG $\game$ is the same as the winning set $\tilde{W}_\sq$ of Even for the SPG $\tilde{\game}$ where $\tilde{\mathcal{P}}(s) := \mathcal{P}(s)+1$ for each $s \in S$.

There are two approaches to solve the qualitative SPG problem: the nested fixed point algorithm \cite{DBLP:conf/cav/HahnSTZ16} and the gadget construction \cite{DBLP:conf/csl/ChatterjeeJH03}.

\myparskip{Nested Fixed Point Algorithm.} The first approach is a recursive-style algorithm of Hahn et al. \cite{DBLP:conf/cav/HahnSTZ16}. It can be considered as an adaptation of \cite{DBLP:journals/apal/McNaughton93,DBLP:journals/tcs/Zielonka98} for non-stochastic parity games (i.e., with a transition function $\delta : S \times A \to S$).

We first define \emph{attractors}; they can be computed via iterative fixed-point algorithms.

\begin{mydefinition}[attractor $\textnormal{\attr}$]\label{def:attr}
Let $\game$ be an SPG as in Definition~\ref{def:game}. The \emph{attractors} are the functions $\attr_{\sq,\bowtie}$ and $\attr_{\tri,\bowtie}: 2^S \to 2^S$, where $\bowtie\,\,\in \{{>0},{=1}\}$, defined by
\begin{align*}
\attr_{\sq,\bowtie}(S') &:= \{ s \in S : \exists \EvenStrategy \in \EvenStrategies, \forall \OddStrategy \in \OddStrategies, \mathbb{P}^{\strategies}_s(\mathsf{F}\,S') \bowtie\},\\
\attr_{\tri,\bowtie}(S') &:= \{ s \in S : \exists \OddStrategy \in \OddStrategies, \forall \EvenStrategy \in \EvenStrategies, \mathbb{P}^{\strategies}_s(\mathsf{F}\,S') \bowtie\}.
\end{align*}
\end{mydefinition}

\begin{algorithm}[t]
\caption{the algorithm of \cite{DBLP:conf/cav/HahnSTZ16} solving the qualitatve SPG problem.}\label{alg:qualitative}
\DontPrintSemicolon
\SetAlgoNoLine
$\textsc{QualitativeSPG}(\game)$\\\Indp
    \lIf{$S = \emptyset$}{\Return $(\emptyset, \emptyset)$}
    $x \gets \min(\mathcal{P}(S)); X \gets \{ s \in S : \mathcal{P}(s) = x \}$\\
    \If{$x$ \textup{is even}}{
        $(W'_\sq,L'_\sq) \gets \textsc{QualitativeSPG}(\game \setminus \attr_{\sq,>0}(X))$\\
        \lIf{$L'_\sq = \emptyset$}{\Return $(S, \emptyset)$}
        $(W''_\sq, L''_\sq) \gets \textsc{QualitativeSPG}(\game \setminus \attr_{\tri,>0}(L'_\sq))$\\
        \Return $(W''_\sq, L'_\sq \cup L''_\sq)$
    }
    \If{$x$ \textup{is odd}}{
        $(W'_\sq,L'_\sq) \gets \textsc{QualitativeSPG}(\game \setminus \attr_{\tri,>0}(X))$\\
        \lIf{$W'_\sq = \emptyset$}{\Return $(\emptyset, S)$}
        $(W''_\sq, L''_\sq) \gets \textsc{QualitativeSPG}(\game \setminus \attr_{\sq,=1}(W'_\sq))$\\
        \Return $(W'_\sq \cup W''_\sq, L''_\sq)$
    }\Indm
\;
$\textsc{EvenWinSet}(\game)$\\\Indp
    $(W_\sq,L_\sq) \gets \textsc{QualitativeSPG}(\game)$; \Return $W_\sq$\\\Indm
\;
$\textsc{OddWinSet}(\game)$\\\Indp
    \Return $\textsc{EvenWinSet}(\tilde{\game})$ where $\tilde{\game}$ has $\tilde{\mathcal{P}}(s) := \mathcal{P}(s) + 1$ for each $s \in S$
\end{algorithm}

The algorithm \textsc{QualitativeSPG} in Algorithm~\ref{alg:qualitative} outputs $(W_\sq,L_\sq)$ such that $W_\sq$ is the winning set of Even (cf.~Definition~\ref{def:winset}) and $L_\sq = \{ s \in S : \mathcal{V}(s) < 1 \} = S \setminus W_\sq$. We let $\game \setminus X$ denote the subgame $\game' = (S',\, S_\sq \setminus X,\, S_\tri \setminus X,A,\av',\delta,\mathcal{P})$ where $S' := S \setminus X$ and $\av' : S' \to 2^A \setminus \{\emptyset\}$ with $\av'(s') := \{ a' \in \av(s') : \post(s',a') \subseteq S' \}$ for all $s' \in S'$.




\myparskip{Gadget Construction.} The second approach is to translate an SPG into a non-stochastic parity game via the \emph{gadget construction} by Chatterjee~et~al.~\cite{DBLP:conf/csl/ChatterjeeJH03}. The construction is guaranteed to preserve the winning set of one player.

After the construction, any algorithm solving non-stochastic parity games applies, e.g., \cite{DBLP:journals/siamcomp/CaludeJKLS22,DBLP:conf/atva/FriedmannL09,DBLP:conf/lics/JurdzinskiL17,DBLP:journals/siamcomp/JurdzinskiPZ08,DBLP:conf/lics/Lehtinen18,DBLP:journals/apal/McNaughton93,DBLP:conf/mfcs/Parys19,DBLP:journals/tcs/Zielonka98}. Nonetheless, the size of the constructed non-stochastic parity game can increase by a factor of $\mathcal{O}(\max(\mathcal{P}(S)))$ compared to the original SPG. This can lead to a longer solving time as observed in \cite{DBLP:conf/cav/HahnSTZ16}.

\subsection{Solving Quantitative SPGs}\label{subsec:quan-prob}

Below is the definition of the problem, which is the focus of this work.

\begin{mdframed}
\begin{myproblem}[quantitative SPG problem]\label{prob:quant-prob}
Given an SPG $\game$, an initial state $\init \in S$, and a precision $\varepsilon > 0$, the \emph{quantitative SPG problem} is to output a value that is $\varepsilon$-close to $\mathcal{V}(\init)$ (i.e., whose difference from $\mathcal{V}(\init)$ is no more than $\varepsilon$).
\end{myproblem}
\end{mdframed}

To the best of our knowledge, the only available approach to solve the quantitative SPG problem is by \emph{strategy iteration} (also called \emph{strategy improvement}) \cite{DBLP:conf/stacs/ChatterjeeH06,DBLP:conf/vmcai/HahnST017}. The algorithm of \cite{DBLP:conf/vmcai/HahnST017} was implemented and is available as part of the EPMC model checking tool \cite{DBLP:conf/vmcai/FuHLSSTZ22}.

Roughly speaking, strategy iteration begins with an arbitrary strategy $\sigma_{\sq,0}$ of Even. By fixing $\sigma_{\sq,i}$ for $i \in \mathbb{N}$, the SPG becomes an MDP, where one can compute an optimal strategy $\sigma_{\tri,i}$ of Odd for the parity objective \cite{DBLP:phd/us/Alfaro97,DBLP:journals/tac/CourcoubetisY98}. By fixing $\sigma_{\tri,i}$, one may \emph{improve} $\sigma_{\sq,i}$ to $\sigma_{\sq,i+1}$ such that the winning probabilities of Even increase. When $\sigma_{\sq,i}$ cannot be further improved, $\sigma_{\sq,i}$ is optimal for Even. The value $\mathcal{V}(\init)$ can then be computed accordingly.

It is mentioned in \cite{DBLP:conf/vmcai/HahnST017} that computing an optimal strategy for a parity objective of an MDP can become a bottleneck. Their experiment showed that more than half of the computation time was contributed to MDP solving.

\section{Our BVI Algorithm for SPG}\label{sec:alg}

\subsection{Intuition}\label{subsec:intuition}

Recall that the parity objective is determined by the set of states visited infinitely often. Any infinite path must eventually remain within some end component (EC) in order to visit a state infinitely often. We begin by defining the following classifications of ECs.

\begin{mydefinition}[Even-dominating and Odd-dominating ECs]\label{def:win-EC}
Let $\game$ be an SPG as in Definition~\ref{def:game} and $E = (S_E,\av_E)$ be an EC (cf.~Definition~\ref{def:ec}). We say that $E$ is \emph{Even-dominating} if there exists a strategy $\EvenStrategy \in \EvenStrategies$ such that (i) $\EvenStrategy(s) \in \av_E(s)$ for all $s \in S_E \cap S_\sq$, and (ii) for all strategies $\OddStrategy \in \OddStrategies$ such that $\OddStrategy(s) \in \av_E(s)$ holds for all $s \in S_E \cap S_\tri$, we have $\mathbb{P}^{\strategies}_s(\parity_\sq) = 1$ for all $s \in S_E$. An \emph{Odd-dominating} EC is defined analogously.
\end{mydefinition}

Intuitively, an EC $E$ is Even-dominating if Even can ensure winning as long as both players remain within $E$. Since Odd loses by remaining within $E$, it is optimal for Odd to exit $E$ (if possible). Similarly, $E$ is Odd-dominating if Odd can ensure winning under the same condition, and it is optimal for Even to exit $E$ (if possible).

Based on the above intuition, our algorithm considers Even-dominating and Odd-dominating ECs, and adjusts the bounds for the winning probabilities accordingly. An example is provided in Section~\ref{subsec:example}.

\subsection{Algorithm}\label{subsec:alg}

Here is our BVI algorithm for the quantitative SPG problem.

\begin{algorithm}[t]
\caption{our SPG-BVI algorithm.}\label{alg:ec-spg-bvi}
\DontPrintSemicolon
\SetAlgoNoLine
$\textsc{SPG-BVI}(\game,\init,\varepsilon)$\\\Indp
    $\ell_0 \gets \bot$; $u_0 \gets \top$; $i \gets 0$\\
    \While{$u_i(\init) - \ell_i(\init) > 2\varepsilon$}{
        $i \gets i+1; \ell_i \gets \bellman \ell_{i-1}; u_i \gets \bellman u_{i-1}$\tcp*[f]{Definition \ref{def:bellman}}\label{alg2:line4}\\
        Let $\av' : S \to 2^A \setminus \{\emptyset\}$ be a function where $\av'(s) := \arg\max_{a \in \av(s)} \phi_{u_i}(s,a)$ for $s \in \EvenState$ and $\av'(s) := \,\arg\min_{a \in \av(s)} \,\phi_{\ell_i}(s,a)$ for $s \in \OddState$\label{alg2:line5}\\
        $\game' \gets (S,\EvenState,\OddState,A,\av',\delta,\mathcal{P})$\\
        $W'_\sq\;\! \gets \textsc{EvenWinSet}(\game')$; $W'_\tri \gets \textsc{OddWinSet}(\game')$\tcp*[f]{Algorithm \ref{alg:qualitative}}\\
        \ForEach(\tcp*[f]{Definition \ref{def:ec}}){$E = (S_E,\av_E) \in \textnormal{\textsc{BMEC}}(W'_\sq,\av')$}{
            $\ell_i(s) \gets \max(\ell_i(s),\exit_\tri(E,\ell_i))$ for each $s \in S_E$\label{alg2:line9}}
        \ForEach{$E = (S_E,\av_E) \in \textnormal{\textsc{BMEC}}(W'_\tri,\av')$}{
            $u_i(s) \gets \min(u_i(s),\exit_\sq(E,u_i))$ for each $s \in S_E$}
    }
    \Return $\frac{1}{2}(\ell_i(\init) + u_i(\init))$
\end{algorithm}

\begin{mdframed}
\begin{mydefinition}[$\textnormal{\textsc{SPG-BVI}}(\game,\init,\varepsilon)$]
Our SPG-BVI algorithm for solving the quantitative SPG problem is in Algorithm~\ref{alg:ec-spg-bvi}.
\end{mydefinition}
\end{mdframed}

The algorithm maintains two sequences of functions $\ell_i$ and $u_i : S \to [0,1]$ for $i \in \mathbb{N}$. They act as the bottom-up and top-down sequences for the value function $\mathcal{V}$, respectively. The first elements are initialized as $\ell_0 \gets \bot$ and $u_0 \gets \top$.

In the $i$-th iteration ($i > 0$), two new elements $\ell_i \gets \bellman \ell_{i-1}$ and $u_i \gets \bellman u_{i-1}$ are generated. After that, the algorithm computes the restricted actions for both players. For $s \in \EvenState$, an action $a \in \av(s)$ is available in $\av'(s)$ if $\phi_{u_i}(s,a)$ is \emph{maximum} among all actions in $\av(s)$. For $s \in \OddState$, an action $a \in \av(s)$ is available in $\av'(s)$ if $\phi_{\ell_i}(s,a)$ is \emph{minimum} among all actions in $\av(s)$. Such restriction induces the \emph{estimated-optimal subgame} $\game'$.

Next, the algorithm computes the winning sets $W'_\sq, W'_\tri$ of Even and Odd (cf.~Definition~\ref{def:winset}), respectively, for the estimated-optimal subgame $\game'$. We then look at each bottom maximal EC (BMEC) within $(W'_\sq,\av')$ (cf.~Definition~\ref{def:ec}), i.e., each $E \in \textsc{BMEC}(W'_\sq,\av')$. We show that such $E$ is \emph{Even-dominating} (cf.~Definition~\ref{def:win-EC}) in both $\game'$ and $\game$.

\begin{lemma}\label{lem:bmec-win}
Let $\game$ be an SPG as in Definition~\ref{def:game}, $\av' : S \to 2^A \setminus \{\emptyset\}$ such that $\av'(s) \subseteq \av(s)$ for all $s \in S$, $\game' = (S,S_\sq,S_\tri,A,\av',\delta,\mathcal{P})$ be a subgame of $\game$, $W'_\sq$ is the winning set of Even (cf.~Definition~\ref{def:winset}) for $\game'$, and $E \in \textnormal{\textsc{BMEC}}(W'_\sq,\av')$ (cf.~Definition~\ref{def:ec}). Then, $E$ is an Even-dominating EC (cf.~Definition~\ref{def:win-EC}) in both $\game'$ and $\game$.
\end{lemma}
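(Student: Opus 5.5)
The plan is to reduce everything to the game $\game'$, then use maximality and bottomness of $E$ to show that an almost-sure winning strategy of Even in $\game'$ never needs to leave $E$.

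\textbf{Reduction to $\game'$.} The notion of Even-dominating EC (Definition~\ref{def:win-EC}) only looks at Odd strategies $\OddStrategy$ with $\OddStrategy(s) \in \av_E(s)$ on $S_E \cap S_\tri$. It also only looks at Even strategies with $\EvenStrategy(s) \in \av_E(s)$ on $S_E \cap S_\sq$. By condition (i) of Definition~\ref{def:ec}, every such pair $\strategies$ keeps a play started in $S_E$ inside $S_E$ forever. Hence $\mathbb{P}^\strategies_s(\parity_\sq)$ for $s \in S_E$ depends only on the restriction of $\strategies$ to $S_E$. Since $\av_E(s) \subseteq \av'(s) \subseteq \av(s)$, these restrictions range over the same set of choices in $\game$ and in $\game'$. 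Values of strategies outside $S_E$ can be chosen arbitrarily, and in $\game'$ they must lie in $\av'$. So it suffices to prove that $E$ is Even-dominating in $\game'$; the statement for $\game$ then follows verbatim.

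\textbf{Choosing Even's strategy.} Take a pure memoryless strategy $\EvenStrategy$ of Even in $\game'$ that is optimal. In particular it is almost-sure winning from every state of $W'_\sq$ against every Odd strategy in $\game'$; such a strategy exists by \cite{DBLP:conf/soda/ChatterjeeJH04,DBLP:conf/fossacs/Zielonka04}.

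\textbf{Key claim: $\EvenStrategy$ never leaves $W'_\sq$.} For every $s \in W'_\sq \cap S_\sq$ we will show $\post(s,\EvenStrategy(s)) \subseteq W'_\sq$. Suppose instead that $\post(s,\EvenStrategy(s))$ contains some $t \notin W'_\sq$, so the value of $t$ in $\game'$ is below $1$. Let Odd play an optimal strategy of $\game'$ everywhere.
\begin{itemize}
\item From $s$, the state $t$ is reached with positive probability.
\item From $t$, Even's winning probability under this Odd strategy is at most the value of $t$, hence strictly below $1$. Because $\EvenStrategy$ is memoryless, its behaviour from $t$ is the same whether or not $t$ was reached from $s$.
\item Combining these, Even's winning probability from $s$ is strictly below $1$, contradicting that $\EvenStrategy$ is almost-sure winning from $s$.
\end{itemize}
I expect this to be the main technical point. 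One has to argue carefully with memoryless strategies and determinacy so that Odd's deviation can be taken positional. Since optimal Odd strategies in $\game'$ are positional and optimal from every state simultaneously, this should go through.

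\textbf{Using bottomness and maximality.} Let $s \in S_E \cap S_\sq$ and $a = \EvenStrategy(s)$. Then $a \in \av'(s)$ and, by the key claim, $\post(s,a) \subseteq W'_\sq$. Because $E$ is bottom within $(W'_\sq,\av')$, this gives $\post(s,a) \subseteq S_E$.

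Now add $a$ to $\av_E(s)$. Condition (i) of Definition~\ref{def:ec} still holds, since $\post(s,a) \subseteq S_E$. Condition (ii) is preserved because adding actions only adds paths. So the enlarged pair is an EC within $(W'_\sq,\av')$. By maximality of $E$, the enlarged EC cannot have a strictly larger action set at $s$, so $a \in \av_E(s)$ already. Thus $\EvenStrategy$ satisfies condition (i) of Definition~\ref{def:win-EC}.

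\textbf{Conclusion.} Let $\OddStrategy$ be any Odd strategy of $\game'$ with $\OddStrategy(s) \in \av_E(s)$ on $S_E \cap S_\tri$. Since $S_E \subseteq W'_\sq$ and $\EvenStrategy$ is almost-sure winning on $W'_\sq$, we get $\mathbb{P}^\strategies_s(\parity_\sq) = 1$ for all $s \in S_E$. Hence $E$ is Even-dominating in $\game'$, and by the reduction also in $\game$.
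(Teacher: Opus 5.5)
Your proof is correct and follows the same route as the paper: an almost-sure winning memoryless strategy of Even in $\game'$ keeps every state of $S_E \subseteq W'_\sq$ inside $W'_\sq$, bottomness then forces its successors into $S_E$, and maximality places its chosen actions in $\av_E$. You also justify two steps the paper only asserts, namely that the winning strategy never leaves $W'_\sq$ and that Even-domination transfers from $\game'$ to $\game$; and you rightly drop the paper's extra claim that Odd cannot escape $W'_\sq$, which is unnecessary because Definition~\ref{def:win-EC} already restricts Odd to $\av_E$ on $S_E$.
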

\begin{proof}
Consider an EC $E = (S_E,\av_E) \in \textsc{BMEC}(W'_\sq,\av')$. Because $E$ is bottom within $(W'_\sq,\av')$, we obtain $\post(s,a) \subseteq W'_\sq \implies \post(s,a) \subseteq S_E$ for all $s \in S_E$ and $a \in \av'(s)$. Due to the maximality of $E$ within $(W'_\sq,\av')$, it must be the case that $\av_E$ includes all actions within $S_E$, i.e., $\post(s,a) \subseteq S_E \implies a \in \av_E(s)$ for all $s \in S_E$ and $a \in \av'(s)$.

Since $S_E \subseteq W'_\sq$, any (almost-sure qualitative) winning strategy $\sigma'_\sq : S \to A$ (with $\sigma'_\sq(s) \in \av'(s)$ for $s \in \EvenState$) of Even with respect to $\game'$ (which induces $W'_\sq$) gives $\post(s,\sigma'_\sq(s)) \subseteq W'_\sq$ for all $s \in S_E \cap \EvenState$. Also, Odd cannot escape $W'_\sq$ (i.e., $\post(s,\sigma'_\tri(s)) \subseteq W'_\sq$ for all $s \in S_E \cap \OddState$ and $\sigma'_\tri : S \to A$ with $\sigma'_\tri(s) \in \av'(s)$ for $s \in \OddState$). Notice that, for $\strategies' = (\sigma'_\sq,\sigma'_\tri)$, we get $\post(s,\strategies'(s)) \subseteq W'_\sq$ for all $s \in S_E$. This implies $\post(s,\strategies'(s)) \subseteq S_E$ and $\strategies'(s) \in \av_E(s)$ for all $s \in S_E$. Hence, such $\sigma'_\sq$ gives a strategy such that $E$ is Even-dominating. Thus, $E$ is Even-dominating in $\game'$. As all ECs in $\game'$ are ECs in $\game$, therefore $E$ is also Even-dominating in $\game$. \qed
\end{proof}

For each $E = (S_E,\av_E) \in \textsc{BMEC}(W'_\sq,\av')$, we set the value $\ell_i(s)$ for each $s \in S_E$ to be the maximum of $\ell_i(s)$ and the \emph{exit value} $\exit_\tri(E,\ell_i)$ from $E$ for Odd, defined below.

\begin{mydefinition}[exit value $\textnormal{\exit}$]\label{def:exit}
Let $\game$ be an SPG as in Definition~\ref{def:game}, $E = (S_E,\av_E)$ be an EC as in Definition~\ref{def:ec}, and $f : S \to [0,1]$ be a function. The \emph{exit values} $\exit_\sq$ and $\exit_\tri$ with respect to $E$ and $f$ are defined as
\begin{align*}
\exit_\sq(E,f) &:= \max \{ \phi_f(s,a) : s \in S_E \cap S_\sq, a \in \av(s) \setminus \av_E(s) \},\\
\exit_\tri(E,f) &:= \,\min \{ \phi_f(s,a) : s \in S_E \cap S_\tri, a \in \av(s) \setminus \av_E(s) \}
\end{align*}
with the convention that $\max \emptyset = 0$ and $\min \emptyset = 1$.
\end{mydefinition}

Dually, for each $E = (S_E,\av_E) \in \textsc{BMEC}(W'_\tri,\av')$, it is \emph{Odd-dominating} in both $\game'$ and $\game$. We set the value $u_i(s)$ for $s \in S_E$ to be $\min(u_i(s), \exit_\sq(E,u_i))$. The while loop repeats until $\ell_i(\init)$ and $u_i(\init)$ are close within $2\varepsilon$, at which point the algorithm stops and returns the middle point $\frac{1}{2}(\ell_i(\init)+u_i(\init))$.

\subsection{Example}\label{subsec:example}

We present a detailed example of executing \textsc{SPG-BVI} (Algorithm~\ref{alg:ec-spg-bvi}) with the SPG from Figure~\ref{fig:spg} (left).

\begin{example}
Let $\game$ be the SPG from Figure~\ref{fig:spg} (left). We represent a function $f : S \to [0,1]$ by a tuple, namely $f = (f(\sq_0),f(\tri_1),f(\sq_2),f(\tri_2),f(\sq_3),f(\sq_4))$. The algorithm begins with $\ell_0 = (0,0,0,0,0,0)$ and $u_0 = (1,1,1,1,1,1)$.

For $i = 1$, $\ell_1 \gets \bellman \ell_0 = \ell_0$ and $u_1 \gets \bellman u_0 = u_0$. Under $\ell_1$ and $u_1$, we have $\av' = \av$ (i.e., no actions are removed) and $\game' = \game$. The winning sets of both players are $W'_\sq = \{\tri_2,\sq_4\}$ and $W'_\tri = \{\sq_3\}$.

The first Even-dominating BMEC is $E = (S_E, \av_E)$ with $S_E = \{\tri_2\}$ and $\av_E : \tri_2 \mapsto \{\alpha\}$. The exit value from $E$ is $\exit_\tri(E,\ell_1) = 1$ (e.g., the set of exiting actions is empty). Thus, the algorithm sets $\ell_1(\tri_2) \gets 1$. Similarly, we have $\ell_1(\sq_4) \gets 1$ and $u_1(\sq_3) \gets 0$. At the end of iteration $i = 1$, we have $\ell_1 = (0,0,0,1,0,1)$ and $u_1 = (1,1,1,1,0,1)$.

For $i = 2$, $\ell_2 \gets \bellman \ell_1 = (0.7,0,0.2,1,0,1)$ and $u_2 \gets \bellman u_1 = (1,0.9,1,1,0,1)$. The subgame $\game'$ and its actions under $\av'$ are shown in Figure~\ref{fig:spg} (middle). The winning sets of both players are $W'_\sq = \{\tri_2,\sq_4\}$ and $W'_\tri = \{\sq_0,\tri_1,\sq_2,\sq_3\}$.

The BMECs involving $\tri_2,\sq_3,\sq_4$ do not result in any modifications of $\ell_2$ and $u_2$. When considering $E = (S_E,\av_E) \in \textsc{BMEC}(W'_\tri,\av')$ with $S_E = \{\tri_1,\sq_2\}$ and $\av_E : \tri_1 \mapsto \{\alpha\} \mid \sq_2 \mapsto \{\alpha\}$, it is Odd-dominating and Even must escape via $\beta \in \av(\sq_2) \setminus \av'(\sq_2)$. The exit value from $E$ is $\exit_\sq(E,u_2) = 0.2$, hence we obtain $u_2(\tri_1) \gets 0.2$ and $u_2(\sq_2) \gets 0.2$. At the end of iteration $i = 2$, we have $\ell_2 = (0.7,0,0.2,1,0,1)$ and $u_2 = (1,0.2,0.2,1,0,1)$.

For $i = 3$, the algorithm computes $\ell_3 \gets \bellman \ell_2 = (0.7,0.2,0.2,1,0,1)$ and $u_3 \gets \bellman u_2 = (0.7,0.2,0.2,1,0,1)$. The subgame $\game'$ and its actions under $\av'$ are shown in Figure~\ref{fig:spg} (right). The winning sets of both players are $W'_\sq = \{\tri_2,\sq_4\}$ and $W'_\tri = \{\sq_3\}$. No further adjustments are made on both $\ell_3$ and $u_3$. At the end of iteration $i = 3$, we have $\ell_3 = u_3$. Therefore, the while loop terminates.
\end{example}

\subsection{Correctness and Convergence}\label{subsec:proofs}

We prove the correctness and convergence of our \textsc{SPG-BVI} algorithm (Algorithm~\ref{alg:ec-spg-bvi}) by showing that (i) $\ell_i \leq \ell_{i+1} \leq \mathcal{V} \leq u_{i+1} \leq u_i$ for all $i \in \mathbb{N}$, and (ii) $\lim_{i \to \infty} \ell_i = \mathcal{V} = \lim_{i \to \infty} u_i$. We only provide the proofs for the sequence $\ell_i$. Similar arguments can be done for the sequence $u_i$.

\begin{lemma}\label{lemma:ec-monotone}
Assume the setting of Algorithm~\ref{alg:ec-spg-bvi}, $E = (S_E,\av_E)$ be an EC, and $f : S \to [0,1]$ be a function. Let $f'(s) := \max(f(s),\exit_\tri(E,f))$ if $s \in S_E$ and $f'(s) := f(s)$ otherwise. If $f \leq \bellman f$, then $f' \leq \bellman f'$.
\end{lemma}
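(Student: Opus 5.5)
The plan is a direct pointwise verification. It rests on two facts: $f \le f'$, and on $S_E$ the function $f'$ is bounded below by the constant $c := \exit_\tri(E,f)$. Since $f' \ge f$ pointwise, monotonicity of $\bellman$ together with the hypothesis gives $\bellman f' \ge \bellman f \ge f$. This immediately handles every state $s$ with $f'(s) = f(s)$: states outside $S_E$, and states in $S_E$ where $f(s) \ge c$. For these, $f'(s) = f(s) \le (\bellman f)(s) \le (\bellman f')(s)$.

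The remaining case is $s \in S_E$ with $f'(s) = c > f(s)$. Here it suffices to show $(\bellman f')(s) \ge c$. The key observation is that for any $s \in S_E$ and any $a \in \av_E(s)$, condition (i) of Definition~\ref{def:ec} gives $\post(s,a) \subseteq S_E$. Since $f' \ge c$ on all of $S_E$, it follows that $\phi_{f'}(s,a) = \sum_{s' \in S_E} \delta(s,a,s') f'(s') \ge c$.

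We then split by owner of $s$.

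If $s \in \EvenState$, the set $\av_E(s)$ is nonempty by Definition~\ref{def:ec}. Picking any $a \in \av_E(s)$ gives $(\bellman f')(s) = \max_{a' \in \av(s)} \phi_{f'}(s,a') \ge \phi_{f'}(s,a) \ge c$.

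If $s \in \OddState$, we must bound every action, since $\bellman$ takes a minimum. Actions $a \in \av_E(s)$ are handled by the observation above. For $a \in \av(s) \setminus \av_E(s)$, monotonicity of $\phi$ in $f$ gives $\phi_{f'}(s,a) \ge \phi_f(s,a)$. By Definition~\ref{def:exit}, $\phi_f(s,a) \ge \exit_\tri(E,f) = c$, because $(s,a)$ is one of the pairs over which that minimum is taken. Hence $(\bellman f')(s) = \min_{a} \phi_{f'}(s,a) \ge c$.

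The only point needing care is the Odd case with exiting actions. There, the definition of $\exit_\tri$ as a minimum over exactly $S_E \cap \OddState$ and $\av(s)\setminus\av_E(s)$ must match what is needed. The convention $\min\emptyset = 1$ is harmless: it only makes $c$ larger in a situation where no exiting Odd actions exist, and the internal-action argument still applies. Even's exiting actions need no control because Even maximizes. The result is $f' \le \bellman f'$ pointwise.
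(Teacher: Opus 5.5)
Your proposal is correct and follows the paper's proof essentially step for step. Both use $f \le f'$ with monotonicity of $\bellman$ for the states where $f'(s)=f(s)$, an internal action $a\in\av_E(s)$ with $\post(s,a)\subseteq S_E$ for Even's states, and, for Odd's states, the internal-action bound together with $\phi_{f'}(s,a)\ge\phi_f(s,a)\ge\exit_\tri(E,f)$ for exiting actions (your remark on the convention $\min\emptyset=1$ just makes explicit a case the paper leaves implicit).
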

\begin{proof}
Suppose $f \leq \bellman f$. By the definition of $f'$, we have $f \leq f'$. This implies $\bellman f \leq \bellman f'$ by the monotonicity of $\bellman$. If $f'(s) = f(s)$, then $f'(s) = f(s) \leq (\bellman f)(s) \leq (\bellman f')(s)$. The remaining case is that $s \in S_E$ and $f'(s) = \exit_\tri(E,f)$. We distinguish cases whether $s$ belongs to Even or Odd.

If $s \in S_E \cap S_\sq$, then there is an action $\hat{a} \in \av_E(s)$ where $\post(s,\hat{a}) \subseteq S_E$ (since $E$ is an EC). Consider an arbitrary $\hat{a}$-successor $s'$, that is, $s' \in \post(s,\hat{a})$, we have $s' \in S_E$ and $f'(s') \geq \exit_\tri(E,f)$ by the definition of $f'$. Hence, $\phi_{f'}(s,\hat{a}) \geq \exit_\tri(E,f)$. Therefore, $$f'(s) = \exit_\tri(E,f) \leq \phi_{f'}(s,\hat{a}) \leq \max_{a \in \av(s)} \phi_{f'}(s,a) = (\bellman f')(s).$$

If $s \in S_E \cap S_\tri$, then we consider each $a \in \av(s)$ and distinguish whether $a \in \av_E(s)$ or not. If $a \in \av_E(s)$, we have $\phi_{f'}(s,a) \geq \exit_\tri(E,f)$ as for $\hat{a} \in \av_E(s)$ above. Otherwise, $a \in \av(s) \setminus \av_E(s)$. By the definition of the exit value (Definition~\ref{def:exit}), $\phi_f(s,a) \geq \exit_\tri(E,f)$. Since $f' \geq f$, we have $\phi_{f'}(s,a) \geq \phi_f(s,a) \geq \exit_\tri(E,f)$. From both cases, we have $\phi_{f'}(s,a) \geq \exit_\tri(E,f)$ for any $a \in \av(s)$. So,
\begin{align*}
f'(s) = \exit_\tri(E,f) \leq \min_{a \in \av(s)} \phi_{f'}(s,a) = (\bellman f')(s).\tag*{\qed}
\end{align*}
\end{proof}

\begin{proposition}\label{prop:bellman-monotone}
In the setting of Algorithm~\ref{alg:ec-spg-bvi}, $\ell_i \leq \bellman\ell_i$ for all $i \in \mathbb{N}$.
\end{proposition}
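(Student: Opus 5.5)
We argue by induction on $i$. For the base case, $\ell_0 = \bot$, and $\bot \leq \bellman\bot$ holds trivially since every function in $[0,1]^S$ dominates $\bot$.

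For the inductive step, assume $\ell_{i-1} \leq \bellman \ell_{i-1}$. In iteration $i$ the algorithm first sets $g_0 := \bellman \ell_{i-1}$. Since $\bellman$ is monotone, applying it to the hypothesis gives $\bellman\ell_{i-1} \leq \bellman(\bellman\ell_{i-1})$, that is, $g_0 \leq \bellman g_0$.

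The algorithm then performs the updates of Line~\ref{alg2:line9} one BMEC at a time. Enumerate the BMECs as $E_1,\dots,E_k \in \textsc{BMEC}(W'_\sq,\av')$. Write $g_j$ for the function obtained from $g_{j-1}$ by setting $g_j(s) := \max(g_{j-1}(s),\exit_\tri(E_j,g_{j-1}))$ for $s \in S_{E_j}$ and leaving all other states unchanged. Each $E_j$ is an EC by Definition~\ref{def:ec}, so Lemma~\ref{lemma:ec-monotone} applies with $E = E_j$ and $f = g_{j-1}$. From $g_{j-1} \leq \bellman g_{j-1}$ it gives $g_j \leq \bellman g_j$. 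Chaining over $j$ yields $g_k \leq \bellman g_k$, and $g_k$ is the final $\ell_i$. The subsequent loop modifies only $u_i$, so it does not affect $\ell_i$.

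The one point needing care is that the algorithm computes $\exit_\tri(E_j,\ell_i)$ with the current, already partially updated $\ell_i$. This matches $g_{j-1}$ in Lemma~\ref{lemma:ec-monotone}, so the lemma applies directly. Even so, distinct BMECs have disjoint state sets, since maximal ECs are disjoint. Moreover, $\exit_\tri(E_j,\cdot)$ reads values only on successors of exiting actions, and those updated values only increase. So the outcome of the sequential updates does not depend on the order in which the BMECs are processed, and the argument above is independent of that order. I therefore expect no real obstacle: the substantive work is already contained in Lemma~\ref{lemma:ec-monotone}.
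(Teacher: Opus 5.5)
Your proof is correct and is essentially the paper's: induction on $i$, monotonicity of $\bellman$ for the step $\ell_i \gets \bellman\ell_{i-1}$, and Lemma~\ref{lemma:ec-monotone} to show that each BMEC update in Line~\ref{alg2:line9} preserves $\ell_i \leq \bellman\ell_i$ (you only make the one-BMEC-at-a-time chaining explicit). One caution: your side claim that the result of the sequential updates does not depend on processing order is false in general, because an Odd exit action of one BMEC (an action in $\av(s)\setminus\av'(s)$) can lead into another BMEC whose values were just raised in the same iteration; your argument never uses this claim and works for any order, so nothing is lost.
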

\begin{proof}
We prove by induction on $i$. The base case $i = 0$ is trivially true as $\ell_0 = \bot$. For the step case, we assume that the statement is true for $i-1$: $\ell_{i-1} \leq \bellman\ell_{i-1}$, and show that it is true for $i$.

By the monotonicity of $\bellman$ and the induction hypothesis, we have $\bellman\ell_{i-1} \leq \bellman(\bellman\ell_{i-1})$. As Line~\ref{alg2:line4} of Algorithm~\ref{alg:ec-spg-bvi} performs $\ell_i \gets \bellman\ell_{i-1}$, this implies $\ell_i \leq \bellman\ell_i$ after Line~\ref{alg2:line4}. Moreover, by Lemma~\ref{lemma:ec-monotone}, the update of $\ell_i$ in Line~\ref{alg2:line9} preserves such property. Thus, $\ell_i \leq \bellman\ell_i$ holds at the end of the iteration. \qed
\end{proof}

\begin{proposition}\label{prop:increasing}
In the setting of Algorithm~\ref{alg:ec-spg-bvi}, $\ell_i \leq \ell_{i+1}$ for all $i \in \mathbb{N}$.
\end{proposition}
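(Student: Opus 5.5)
The claim follows from Proposition~\ref{prop:bellman-monotone} together with the observation that the exit-value update in Line~\ref{alg2:line9} can only increase values. We fix $i \in \mathbb{N}$ and track the value of $\ell_{i+1}$ through the $(i+1)$-th iteration of the while loop. Write $\ell_i$ for the final value of the $i$-th iteration, which for $i=0$ is simply $\bot$.

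\textbf{Line~\ref{alg2:line4}.} After this line, $\ell_{i+1} = \bellman\ell_i$. Proposition~\ref{prop:bellman-monotone} applied to the fully updated $\ell_i$ gives $\ell_i \leq \bellman\ell_i$, so $\ell_i \leq \ell_{i+1}$ already holds after Line~\ref{alg2:line4}.

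\textbf{Line~\ref{alg2:line9}.} This line is executed once for each $E \in \textsc{BMEC}(W'_\sq,\av')$. Each execution replaces $\ell_{i+1}(s)$ by $\max(\ell_{i+1}(s),\exit_\tri(E,\ell_{i+1}))$ on $S_E$ and leaves every other state unchanged. The new function is therefore pointwise at least the old one, regardless of how $\exit_\tri$ evaluates; this includes the convention $\min\emptyset = 1$. By induction over the finitely many BMECs processed in the loop, $\ell_{i+1}$ is never decreased. Hence $\ell_i \leq \ell_{i+1}$ holds at the end of the iteration.

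\textbf{Subtlety.} Proposition~\ref{prop:bellman-monotone} must be read as a statement about $\ell_i$ \emph{after} all updates of iteration $i$. Its proof establishes exactly this, via Lemma~\ref{lemma:ec-monotone}. We must also check that the $u$-updates performed in the same iteration do not affect $\ell$; they do not, since they modify only $u_i$. Beyond this there is no real obstacle. The argument is a direct combination of the previous proposition with the monotone nature of the max-update.
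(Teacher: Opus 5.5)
Your proposal is correct and follows essentially the same argument as the paper. Proposition~\ref{prop:bellman-monotone} gives $\ell_i \leq \bellman\ell_i = \ell_{i+1}$ after Line~\ref{alg2:line4}, and the max-update in Line~\ref{alg2:line9} can only increase values.
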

\begin{proof}
We instead show $\ell_{i-1} \leq \ell_i$ for $i > 0$. First, Line~\ref{alg2:line4} of Algorithm~\ref{alg:ec-spg-bvi} performs $\ell_i \gets \bellman\ell_{i-1}$. By Proposition~\ref{prop:bellman-monotone}, $\ell_{i-1} \leq \bellman\ell_{i-1} = \ell_i$. Moreover, the update of $\ell_i$ in Line~\ref{alg2:line9} is non-decreasing. Hence, $\ell_{i-1} \leq \ell_i$ holds at the end of the iteration. \qed
\end{proof}

\begin{lemma}\label{lem:below-V}
In the setting of Algorithm~\ref{alg:ec-spg-bvi}, $\ell_i \leq \mathcal{V}$ for all $i \in \mathbb{N}$.
\end{lemma}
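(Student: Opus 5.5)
The proof is by induction on $i$, with the invariant $\ell_i \leq \mathcal{V}$ maintained across both updates of an iteration. The base case $\ell_0 = \bot \leq \mathcal{V}$ is immediate. For the step, assume $\ell_{i-1} \leq \mathcal{V}$. After Line~\ref{alg2:line4}, monotonicity of $\bellman$ and $\mathcal{V} = \bellman\mathcal{V}$ give $\ell_i = \bellman\ell_{i-1} \leq \bellman\mathcal{V} = \mathcal{V}$. It then remains to show that each update in Line~\ref{alg2:line9} preserves the invariant. Since that update only replaces $\ell_i(s)$ by $\max(\ell_i(s),\exit_\tri(E,\ell_i))$ for $s \in S_E$, it suffices to prove
$$\exit_\tri(E,\ell_i) \leq \mathcal{V}(s) \quad \text{for all } s \in S_E,$$
where $E \in \textsc{BMEC}(W'_\sq,\av')$ and $\ell_i \leq \mathcal{V}$ holds for the current (partially updated) $\ell_i$. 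Performing the updates one EC at a time keeps the hypothesis available for each subsequent update.

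To prove the key inequality, let $c := \exit_\tri(E,\ell_i)$. Because $\ell_i \leq \mathcal{V}$, every Odd exit pair $(s,a)$ with $s \in S_E \cap \OddState$ and $a \in \av(s)\setminus\av_E(s)$ satisfies $\phi_{\mathcal{V}}(s,a) \geq \phi_{\ell_i}(s,a) \geq c$. By Lemma~\ref{lem:bmec-win}, $E$ is Even-dominating in $\game$ via some strategy $\sigma_E$ with $\sigma_E(s) \in \av_E(s)$.

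I would define a (history-dependent) Even strategy $\hat\sigma_\sq$ as follows. Play $\sigma_E$ as long as the play has stayed in $S_E$ using only actions of $\av_E$. At the first moment the play leaves this regime, switch permanently to an optimal Even strategy. Since $\mathcal{V}$ is the game value (determinacy also holds over general strategies), $\hat\sigma_\sq$ guarantees at least $\mathcal{V}$ from the first state after the switch. Because Even only uses $\av_E$-actions inside $E$, and these keep the play inside $S_E$, a departure can happen only through an Odd exit pair $(s,a)$.

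Now fix an arbitrary Odd strategy and split the plays into two events. On the event that Odd never exits, the play stays in $E$ forever under $\sigma_E$, and Even wins almost surely there. On the event that Odd first exits via $(s,a)$, the conditional winning probability is at least $\phi_{\mathcal{V}}(s,a) \geq c$. Combining the two events, the winning probability from any $s \in S_E$ is at least $\min(1,c) = c$, hence $\mathcal{V}(s) \geq c$.

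The main obstacle is the first event. Definition~\ref{def:win-EC} only quantifies over memoryless Odd strategies restricted to $\av_E$, whereas here Odd may be history-dependent, and we need almost-sure winning on the set of plays that never exit. I would handle this as follows. Fixing $\sigma_E$ turns the game restricted to $E$ into an MDP for Odd. In a finite MDP with a parity objective, pure memoryless strategies are optimal, so Odd's best probability of achieving $\parity_\tri$ while remaining in $E$ is $0$ over all strategies. This implies that almost all plays staying forever in $E$ are in $\parity_\sq$. I would formalize this by noting that the plays that stay in $E$ coincide with plays of the MDP in which Odd's exit actions are replaced by a losing sink.
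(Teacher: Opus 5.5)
Your proposal is correct and follows the paper's own argument. It uses induction with $\bellman\ell_{i-1}\le\bellman\mathcal{V}=\mathcal{V}$, and then bounds $\exit_\tri(E,\ell_i)\le\mathcal{V}(s)$ on $S_E$ via Lemma~\ref{lem:bmec-win}, splitting plays into those that stay in $E$ (won almost surely) and those that leave through an Odd exit (worth at least $\phi_{\mathcal{V}}(s,a)\ge\exit_\tri(E,\ell_i)$). Your switching strategy and the MDP argument extending Definition~\ref{def:win-EC} to history-dependent Odd behavior make rigorous what the paper only states informally. One step is left implicit: a memoryless Odd strategy in your sink MDP that exits at some states is not covered by Definition~\ref{def:win-EC}. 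To close it, compare with a strategy that agrees elsewhere but picks an $\av_E$ action at those states; plays avoiding those states have the same measure under both.
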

\begin{proof}
We prove by induction on $i$. The base case $i = 0$ is trivially true as $\ell_0 = \bot$. For the step case, we assume that the statement is true for $i-1$: $\ell_{i-1} \leq \mathcal{V}$, and show that it is true for $i$.

By the monotonicity of $\bellman$ and the induction hypothesis, we have $\bellman \ell_{i-1} \leq \bellman \mathcal{V} = \mathcal{V}$ (cf.~Definition~\ref{def:bellman}). As Line~\ref{alg2:line4} of Algorithm~\ref{alg:ec-spg-bvi} performs $\ell_i \gets \bellman\ell_{i-1}$, this implies $\ell_i \leq \mathcal{V}$ after Line~\ref{alg2:line4}. The rest is to show that the update of $\ell_i$ in Line~\ref{alg2:line9} preserves such property.

Consider an EC $E = (S_E,\av_E) \in \textsc{BMEC}(W'_\sq,\av')$. By Lemma~\ref{lem:bmec-win}, $E$ is Even-dominating in $\game$, and staying in $E$ results in the winning probability of $1$ (under some Even's strategy). If an infinite path instead exits $E$, the winning probability is at least the value of the smallest exit of Odd, i.e., $\exit_\tri(E,\ell_i)$. Hence, all states in $S_E$, either staying or exiting, have the winning probability of at least $\min(1,\exit_\tri(E,\ell_i)) = \exit_\tri(E,\ell_i)$. Since $\ell_i \leq \mathcal{V}$, $\exit_\tri(E,\ell_i)$ gives a lower bound for $\mathcal{V}(s)$, i.e., $\exit_\tri(E,\ell_i) \leq \mathcal{V}(s)$ for all $s \in S_E$. As $\ell_i(s) \leq \mathcal{V}(s)$ and $\exit_\tri(E,\ell_i) \leq \mathcal{V}(s)$ for all $s \in S_E$, we obtain $\max(\ell_i(s),\exit_\tri(E,\ell_i)) \leq \mathcal{V}(s)$. Therefore, $\ell_i \leq \mathcal{V}$ holds at the end of the iteration. \qed
\end{proof}

By Proposition~\ref{prop:increasing} and Lemma~\ref{lem:below-V}, the sequence $\ell_i$ is non-decreasing and bounded from above, so the limit $\lim_{i \to \infty} \ell_i$ exists. The dual argument for $u_i$ is straightforward.

\begin{mydefinition}\label{def:limit}
In the setting of Algorithm~\ref{alg:ec-spg-bvi}, we define $\ell^* := \lim_{i \to \infty} \ell_i$ and $u^* := \lim_{i \to \infty} u_i$.
\end{mydefinition}

\begin{proposition}\label{prop:bvi-continuous}
Assume the setting of Algorithm~\ref{alg:ec-spg-bvi}, and let $\bvi : [0,1]^S \times [0,1]^S \to [0,1]^S \times [0,1]^S$ denote the update performed in each iteration of the algorithm (i.e., $\bvi(\ell_i,u_i) = (\ell_{i+1},u_{i+1})$). Then, $\bvi(\ell^*,u^*) = \lim_{i \to \infty} \bvi(\ell_i,u_i)$.
\end{proposition}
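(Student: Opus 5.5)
The plan is to split $\bvi$ into a continuous part and a finitary combinatorial part, and to handle the combinatorial part by a stabilization argument along the monotone sequences. Write $\bvi = \Gamma \circ B$. Here $B(\ell,u) := (\bellman\ell,\bellman u)$. The map $\Gamma$ first computes $\av'$ from the pair (Line~\ref{alg2:line5}), then the subgame $\game'$, the sets $W'_\sq, W'_\tri$ and the BMECs, and finally applies the exit-value corrections. The map $B$ is continuous on the finite-dimensional space $[0,1]^S \times [0,1]^S$, because $\bellman$ is a finite max/min of linear maps. Hence $B(\ell_i,u_i) \to B(\ell^*,u^*)$. Likewise, for a \emph{fixed} choice of $\av'$ (and hence fixed $W'_\sq$, $W'_\tri$ and fixed BMECs), the correction step is continuous. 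It is built from $\max$, $\min$ and the maps $f \mapsto \exit_\tri(E,f)$ and $f \mapsto \exit_\sq(E,f)$, which are finite max/min of the linear maps $f\mapsto\phi_f(s,a)$. So the only source of discontinuity is the selection of $\av'$.

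Since $A$ and $S$ are finite, $\av'$ ranges over a finite set. I would therefore first prove an eventual-inclusion statement. Let $\av'_i$ be computed from $(\bellman\ell_i,\bellman u_i)$ and $\av'_*$ from $(\bellman\ell^*,\bellman u^*)$. Then $\av'_i(s) \subseteq \av'_*(s)$ for all $s$ and all sufficiently large $i$. The reason is that an action strictly suboptimal at the limit, i.e.\ $\phi_{\bellman u^*}(s,a) < \max_{b}\phi_{\bellman u^*}(s,b)$, stays strictly suboptimal for large $i$ by continuity. Next, I would upgrade this to equality for large $i$ using monotonicity. By Proposition~\ref{prop:increasing} (and its dual), $\bellman u_i$ decreases and $\bellman \ell_i$ increases. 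I would argue that two actions tied at the limit are tied, or else differ by a quantity converging to $0$, and try to show they are eventually both kept. If equality fails, I fall back on a subsequence $(i_k)$ along which $\av'_{i_k}$ is a constant $\av^\circ \subseteq \av'_*$. Along it, $W'_\sq$, $W'_\tri$ and the BMEC families are constant, so the continuity of the fixed-configuration correction gives
\[
\lim_k \bvi(\ell_{i_k},u_{i_k}) = \Gamma_{\av^\circ}(B(\ell^*,u^*)).
\]

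Finally I would identify the limits. The left-hand side $\lim_i \bvi(\ell_i,u_i) = \lim_i(\ell_{i+1},u_{i+1}) = (\ell^*,u^*)$ exists by Definition~\ref{def:limit}. So it suffices to show $\bvi(\ell^*,u^*) = (\ell^*,u^*)$. For the Bellman part, Proposition~\ref{prop:bellman-monotone} and continuity give $\ell^* \leq \bellman\ell^*$. Conversely, $\bellman\ell_i \leq \ell_{i+1}$ because the corrections are non-decreasing, so in the limit $\bellman\ell^* \leq \ell^*$, hence $\bellman\ell^* = \ell^*$; dually $\bellman u^* = u^*$. For the correction part, I must show $\exit_\tri(E,\ell^*) \leq \ell^*(s)$ for every $E \in \textsc{BMEC}(W'_\sq,\av'_*)$ and every $s \in S_E$, and dually for $u^*$.

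This last step is the main obstacle. Because $\av'_*$ may strictly contain $\av^\circ$, the limiting subgame can differ from the subgames actually used by the algorithm, and winning sets are not monotone in $\av'$ for both players simultaneously. My first attempt is to show that $\av'_i = \av'_*$ for all large $i$, via monotone convergence of $\phi_{\bellman u_i}$ and $\phi_{\bellman\ell_i}$ together with the fact that tied limiting actions must be tied eventually. If that succeeds, the subsequence is the whole tail, and the fixed-configuration continuity finishes the proof. Otherwise, I would use Lemma~\ref{lem:bmec-win} and Lemma~\ref{lemma:ec-monotone} to argue directly that every Even-dominating BMEC of the limiting subgame already has $\ell^*(s) \geq \exit_\tri(E,\ell^*)$ on $S_E$, using $\bellman\ell^*=\ell^*$ inside $E$.
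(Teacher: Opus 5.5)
\paragraph{The missing step is false, and so is the proposition.} Your reduction is sound. Since $\lim_i\bvi(\ell_i,u_i)=\lim_i(\ell_{i+1},u_{i+1})=(\ell^*,u^*)$, the proposition is equivalent to $\bvi(\ell^*,u^*)=(\ell^*,u^*)$. The following parts of your argument are also correct:
\begin{itemize}
\item $\bellman\ell^*=\ell^*$ and $\bellman u^*=u^*$;
\item the eventual inclusion $\av'_i\subseteq\av'_*$;
\item $(\ell^*,u^*)$ is fixed by the update run with any configuration $\av^\circ$ that recurs infinitely often.
\end{itemize}
The gap is exactly the step you flag: idempotence of the correction computed with the \emph{limit} configuration $\av'_*$. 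Neither continuation can close it, because that step is false.

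Consider the following game.
\begin{itemize}
\item Absorbing states $w$ (priority $0$) and $\bar w$ (priority $1$).
\item States $z,z'$ with one action each: stay with probability $\frac12$, otherwise go to $w$ with probability $\frac p2$ (resp.\ $\frac q2$) and to $\bar w$ with the rest. Then $\ell_i,u_i$ converge to $p$ (resp.\ $q$) strictly and geometrically from both sides.
\item An Odd state $t$ of priority $2$ with actions $\alpha$ (self-loop), $\beta$ (to $s$), $\gamma$ (to $z'$).
\item An Even state $s$ of priority $1$ with actions $\alpha$ (self-loop), $\beta$ (to $t$), $\gamma$ (to $z$).
\end{itemize}
Take $p=0.2$ and $q=0.5$. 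Then $\mathcal{V}(s)=\mathcal{V}(t)=0.2$: Odd plays $\beta$ at $t$, and any cycle through $s$ has minimal priority $1$.

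\paragraph{Running the algorithm.} After four iterations, $\ell=0.15$ and $u=0.6$ on $\{s,t\}$. From the fifth iteration on, the comparisons in Line~\ref{alg2:line5} are won \emph{strictly} by the self-loops:
\begin{itemize}
\item at $t$, because $\ell_i(s)$ keeps increasing;
\item at $s$, because $u_i(t)$ keeps decreasing.
\end{itemize}
So $\av'(s)=\av'(t)=\{\alpha\}$, and the only non-trivial BMECs are $\{t\}$ for Even and $\{s\}$ for Odd. The iterates then satisfy $\ell_i(s)=\ell_i(t)=\ell_{i-1}(z)\uparrow 0.2$ and $u_i(s)=u_i(t)=u_{i-1}(z')\downarrow 0.5$. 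Hence $\ell^*=0.2$ and $u^*=0.5$ on $\{s,t\}$, and both are exact elsewhere.

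\paragraph{The limit is not a fixed point.} At $(\ell^*,u^*)$, the actions $\alpha$ and $\beta$ are tied at $s$ (both $0.5$ under $u^*$) and at $t$ (both $0.2$ under $\ell^*$). So $\av'_*(s)=\av'_*(t)=\{\alpha,\beta\}$. In this subgame Odd surely wins from $\{s,t\}$, which forms a BMEC of $W'_\tri$ with $\exit_\sq=\phi_{u^*}(s,\gamma)=0.2$. Thus $\bvi(\ell^*,u^*)$ lowers $u$ to $0.2$ on $\{s,t\}$, whereas $\lim_i\bvi(\ell_i,u_i)$ has $0.5$ there. The loop also never terminates from $\init=t$ when $\varepsilon<0.15$.

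\paragraph{What this means for your two options.} The example is exactly the scenario you anticipated. $(\ell^*,u^*)$ is fixed by $\bellman$ and by the update with the recurring configuration (self-loops only). But tied actions that enter only in the limit, for \emph{both} players at once, create a new Odd-winning BMEC whose exit value lies strictly below $u^*$.
\begin{itemize}
\item Your first option, eventual equality $\av'_i=\av'_*$, fails.
\item The dual of your second option also fails: here $u^*=\bellman u^*$ holds on an Odd-dominating BMEC $E$, yet $u^*>\exit_\sq(E,u^*)$.
\end{itemize}
The paper's proof goes through precisely your first option. It asserts, without argument, that for large $n$ the restricted actions computed from $(\ell_n,u_n)$ and from $(\ell^*,u^*)$ coincide, so it breaks at the same point. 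No proof of the statement as given is possible. Since the algorithm itself fails to converge here, a repair would have to change the algorithm (for instance, how near-ties are treated when building $\av'$), not just the argument.
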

\begin{proof}
We work in the complete lattice $([0,1]^S \times [0,1]^S,\leq)$ where $(f,g) \leq (f',g')$ if and only if $f \leq f'$ and $g \geq g'$. We denote the difference between two pairs of functions by $(f,g) - (f',g') := \max_{s \in S} \max((f-f')(s), (g'-g)(s))$.

We first argue $\bvi(\ell^*,u^*) \geq \lim_{i \to \infty} \bvi(\ell_i,u_i)$. By Proposition~\ref{prop:increasing}, $\bvi$ is monotone. For any $i \in \mathbb{N}$, because $(\ell^*,u^*) \geq (\ell_i,u_i)$, we obtain $\bvi(\ell^*,u^*) \geq \bvi(\ell_i,u_i)$. This implies $\bvi(\ell^*,u^*) \geq \lim_{i \to \infty} \bvi(\ell_i,u_i)$.

Next, for any given $\epsilon > 0$, there exists a sufficiently large $n \in \mathbb{N}$ such that $(\ell^*,u^*) - (\ell_n,u_n) < \epsilon$. Moreover, when $\epsilon$ is small enough, one can show that the restricted actions $\av'$ (cf.~Line~\ref{alg2:line5} of Algorithm~\ref{alg:ec-spg-bvi}) constructed under $(\ell^*,u^*)$ and $(\ell_n,u_n)$ are the same. This results in the same subgame $\game'$ being analyzed. By Definition~\ref{def:exit}, the exit value of each BMEC differs by at most $\epsilon$ between the two cases. Hence, we get $\bvi(\ell^*,u^*) - \bvi(\ell_n,u_n) < \epsilon$.

Now we prove $\bvi(\ell^*,u^*) = \lim_{i \to \infty} \bvi(\ell_i,u_i)$. Suppose for the sake of contradiction that this does not hold. Then, $\bvi(\ell^*,u^*) > \lim_{i \to \infty} \bvi(\ell_i,u_i)$. Let $\epsilon := \bvi(\ell^*,u^*) - \lim_{i \to \infty} \bvi(\ell_i,u_i) > 0$. By the aforementioned argument with $\frac{\epsilon}{2} > 0$, there is $n \in \mathbb{N}$ giving $\bvi(\ell^*,u^*) - \bvi(\ell_n,u_n) < \frac{\epsilon}{2}$. This contradicts the fact that this gap equals $\epsilon$, since we have just shown it is strictly less than $\frac{\epsilon}{2}$. \qed
\end{proof}

\begin{proposition}\label{prop:bvi-fixed-point}
Assume the setting of Proposition~\ref{prop:bvi-continuous}. Then, $\bvi(\ell^*,u^*) = (\ell^*,u^*)$, i.e., $(\ell^*,u^*)$ is a fixed point of $\bvi$.
\end{proposition}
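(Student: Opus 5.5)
The plan is to combine Proposition~\ref{prop:bvi-continuous} with the definition of $(\ell^*,u^*)$ as the limit of the iterates. By construction of Algorithm~\ref{alg:ec-spg-bvi}, $\bvi(\ell_i,u_i) = (\ell_{i+1},u_{i+1})$ for every $i \in \mathbb{N}$. Here we regard the algorithm as running indefinitely, ignoring the termination test, so that the sequences are infinite. This is the setting in which Definition~\ref{def:limit} makes sense.

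The first step is to note that the shifted sequence has the same limit: $\lim_{i \to \infty} (\ell_{i+1},u_{i+1}) = (\ell^*,u^*)$. Componentwise, $\lim_{i\to\infty}\ell_{i+1} = \ell^*$ and $\lim_{i\to\infty}u_{i+1} = u^*$. These limits exist because $\ell_i$ is non-decreasing and bounded above by $\mathcal{V}$ (Proposition~\ref{prop:increasing} and Lemma~\ref{lem:below-V}), and $u_i$ is non-increasing and bounded below by $\mathcal{V}$ (dual arguments). Since $S$ is finite, pointwise convergence is the same as convergence in the max-norm used in the proof of Proposition~\ref{prop:bvi-continuous}.

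The second step invokes Proposition~\ref{prop:bvi-continuous} directly:
\begin{align*}
\bvi(\ell^*,u^*) = \lim_{i \to \infty} \bvi(\ell_i,u_i) = \lim_{i \to \infty} (\ell_{i+1},u_{i+1}) = (\ell^*,u^*).
\end{align*}
This gives the fixed-point property.

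There is essentially no obstacle beyond what is already contained in Proposition~\ref{prop:bvi-continuous}. The only point requiring care is that the limit on the right-hand side of that proposition is taken in the same sense (componentwise, in the product order where the $u$-component is reversed) as the limit defining $(\ell^*,u^*)$. Reversing the order on the second component does not change which sequences converge or to what, so the two notions agree. The substantive difficulty, namely that the restricted action sets $\av'$ stabilize near $(\ell^*,u^*)$, has already been handled in the continuity proposition.
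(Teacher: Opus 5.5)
Your proposal is correct and is the paper's proof: apply Proposition~\ref{prop:bvi-continuous}, rewrite $\bvi(\ell_i,u_i)$ as $(\ell_{i+1},u_{i+1})$, and use that the shifted sequence also converges to $(\ell^*,u^*)$. Since $\lim_{i\to\infty}\bvi(\ell_i,u_i)=(\ell^*,u^*)$ holds trivially, Proposition~\ref{prop:bvi-continuous} is in fact equivalent to this statement, so, as you note, all the real content sits in that earlier proposition. In particular this includes its claim that $\av'$ agrees at $(\ell_n,u_n)$ and at $(\ell^*,u^*)$, which needs care when two actions are strictly ordered at every finite stage but tie only in the limit.
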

\begin{proof}
Following Proposition~\ref{prop:bvi-continuous}, we have
\begin{align*}
\bvi(\ell^*,u^*) = \lim_{i \to \infty}\bvi(\ell_i,u_i) = \lim_{i \to \infty}(\ell_{i+1},u_{i+1}) = \lim_{i \to \infty}(\ell_i,u_i) = (\ell^*,u^*).\tag*{\qed}
\end{align*}
\end{proof}

We now prove that $\ell^*$ and $u^*$ indeed coincide, which implies $\ell^* = \mathcal{V} = u^*$. The proof follows the principle captured by \cite{DBLP:conf/atva/PhalakarnTH25} called the \emph{maximality inheritance principle}.

\begin{lemma}\label{lem:max-inherit}
In the setting of Algorithm~\ref{alg:ec-spg-bvi}, $\ell^* = u^*$.
\end{lemma}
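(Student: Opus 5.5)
The plan is a contradiction argument in the style of the maximality inheritance principle of \cite{DBLP:conf/atva/PhalakarnTH25}. By Lemma~\ref{lem:below-V} and its dual, $\ell^* \leq \mathcal{V} \leq u^*$. Suppose $\ell^* \neq u^*$, let $\Delta := \max_{s \in S}(u^*(s)-\ell^*(s)) > 0$, and let $X := \{ s \in S : u^*(s)-\ell^*(s) = \Delta \}$. By Proposition~\ref{prop:bvi-fixed-point}, $(\ell^*,u^*)$ is a fixed point of $\bvi$. So we may apply one iteration of Algorithm~\ref{alg:ec-spg-bvi} to $(\ell^*,u^*)$, with $\av'$ and $\game'$ computed from $(\bellman\ell^*,\bellman u^*)$, and the algorithm returns $(\ell^*,u^*)$ itself. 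In particular $\ell^* \geq \bellman\ell^*$ and $u^* \leq \bellman u^*$. Combined with Proposition~\ref{prop:bellman-monotone} and its dual, this gives $\ell^* = \bellman\ell^*$ and $u^* = \bellman u^*$, so the BMEC adjustments at the fixed point are all vacuous.

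\textbf{Step 1: $X$ is closed under $\av'$.} Let $s \in X \cap \EvenState$ and $a \in \av'(s)$. Then $u^*(s) = \phi_{u^*}(s,a)$ and $\ell^*(s) \geq \phi_{\ell^*}(s,a)$, so
$$\Delta = u^*(s)-\ell^*(s) \leq \phi_{u^*}(s,a)-\phi_{\ell^*}(s,a) \leq \Delta.$$
Hence every successor in $\post(s,a)$ lies in $X$. The case $s \in X \cap \OddState$ is symmetric, using $\ell^*(s) = \phi_{\ell^*}(s,a)$ for $a \in \av'(s)$ and $u^*(s) \leq \phi_{u^*}(s,a)$. Thus $X$ is a trap for both players in $\game'$, and $\game'$ restricted to $X$ is a well-defined SPG.

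\textbf{Step 2: locate a dominating end component inside $X$.} Any play in $\game'|_X$ eventually stays in an EC within $(X,\av')$. Using pure memoryless optimal strategies in $\game'|_X$, I will argue that $X$ contains a state of $W'_\sq$ or of $W'_\tri$, and in fact a BMEC within $(W'_\sq,\av')$ or $(W'_\tri,\av')$ that is contained in $X$. The idea is to take a bottom strongly connected component of $\game'|_X$ under fixed optimal strategies. Its minimal priority is even or odd, so it is almost-sure winning for one player in $\game'$. This should go through since $\game'$ agrees with $\game'|_X$ on $X$. The EC then extends to a BMEC $E$ within $(W'_\sq,\av')$, say, and I will check that $S_E \subseteq X$.

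\textbf{Step 3: contradiction via the exit value.} Take $E$ Even-dominating. Since $\ell^*$ is unchanged by line~\ref{alg2:line9}, $\ell^*(s) \geq \exit_\tri(E,\ell^*)$ for $s \in S_E$. Every exit action of Odd leaves $\av'$, so it is strictly suboptimal for the $\ell^*$-minimization. Moreover $\mathcal{V} \leq u^*$ and $\exit_\sq$-type bounds control $u^*$ on $E$. From these I aim to derive $\ell^*(s) = u^*(s)$ on $S_E$, or at least a gap smaller than $\Delta$, contradicting $S_E \subseteq X$.

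Concretely, I expect to show that within $E$ both $\ell^*$ and $u^*$ equal $\min(1,\text{Odd's best exit})$. For $\ell^*$ this comes from the BMEC update. For $u^*$ it comes from $u^* = \bellman u^*$ applied at a state realizing $\max_{S_E} u^*$: within $E$, Odd can only lower $u^*$ by exiting. The Odd-dominating case is dual.

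I expect Step 2 to be the main obstacle: ensuring that the qualitative winning regions of $\game'$ actually meet $X$ in a whole BMEC, rather than only in positive-probability states whose values lie strictly between $0$ and $1$.
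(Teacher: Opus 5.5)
Your Steps 1 and 2 follow the paper's proof: the gap-maximizer set is closed under $\av'$, so it meets $W'_\sq$ or $W'_\tri$ and contains a BMEC $E$. Step 3, however, has a genuine gap, because the route through $u^*$ does not close.

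\textbf{Why the $u^*$ route fails.} Arguing at a maximizer of $u^*$ on $S_E$ with $u^*=\bellman u^*$ (using $\av_E(s)\subseteq\av'(s)$ at Even states and strong connectivity) only shows two things. First, $u^*$ is constant on $S_E$. Second, this constant is at most $\min(1,\exit_\tri(E,u^*))$, i.e.\ Odd's exits measured in $u^*$, not in $\ell^*$. The BMEC update likewise gives only the inequality $\ell^*\geq\exit_\tri(E,\ell^*)$ on $S_E$, not equality. When Odd has exits from $E$, these two bounds yield only $u^*-\ell^*\leq\exit_\tri(E,u^*)-\exit_\tri(E,\ell^*)\leq\Delta$. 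That is perfectly consistent with $S_E\subseteq X$. So the claimed equalities ``$\ell^*=u^*=\min(1,\text{Odd's best exit})$'' do not follow from the local reasoning you sketch.

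\textbf{The missing idea.} The contradiction comes from $\ell^*$ alone, by combining the two facts you already list.
\begin{enumerate}[label=(\roman*)]
    \item For Odd states $s\in S_E$ one has $\av_E(s)=\av'(s)$. Every $\av'$-action of Odd at a state of $W'_\sq$ stays in $W'_\sq$, hence in $S_E$ by bottomness, hence lies in $\av_E(s)$ by maximality. So Odd's exits are exactly the actions in $\av(s)\setminus\av'(s)$.
    \item If there are no such exits, then $\exit_\tri(E,\ell^*)=1$. Line~\ref{alg2:line9} at the fixed point then forces $\ell^*\equiv1$ on $S_E\subseteq X$, contradicting $\Delta>0$.
    \item Otherwise, let $(s_1,a_1)$ realize $\exit_\tri(E,\ell^*)$. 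Since $a_1\notin\av'(s_1)=\arg\min_a\phi_{\ell^*}(s_1,a)$ and $\ell^*=\bellman\ell^*$, we get $\phi_{\ell^*}(s_1,a_1)>\ell^*(s_1)\geq\exit_\tri(E,\ell^*)=\phi_{\ell^*}(s_1,a_1)$, which is absurd.
\end{enumerate}
The paper phrases item (iii) via the state $s^\flat$ minimizing $\ell^*$ among Odd states with exits, and never needs $u^*$ beyond the no-exit case.

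\textbf{Step 2 is the routine part.} Once $X$ is $\av'$-closed, you should \emph{descend} rather than extend. Take the graph on $X\cap W'_\sq$ whose edges come from the $\av'$-actions that stay inside $W'_\sq$. A bottom SCC of this graph, equipped with all such actions, is a BMEC within $(W'_\sq,\av')$ contained in $X$. By contrast, the maximal EC containing your BSCC under optimal strategies need not be bottom.
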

\begin{proof}
Suppose $\ell^* \neq u^*$. Since Lemma~\ref{lem:below-V} and its dual give $\ell^* \leq u^*$, then there is $s \in S$ with $\ell^*(s) < u^*(s)$. Let $M := \{ s^\dagger \in S : \forall s \in S, (u^*-\ell^*)(s^\dagger) \geq (u^*-\ell^*)(s) \} \subseteq S$ be the \emph{gap maximizer set}. Note that $M$ is well-defined due to the finiteness of $S$. For each $s^\dagger \in M$, we now distinguish cases on whether $s^\dagger \in S_\sq$ or $s^\dagger \in S_\tri$, and show that, in both cases, gap maximality is inherited under all actions $a^\dagger \in \av'(s^\dagger)$ defined in Line~\ref{alg2:line5}.

Assume $s^\dagger \in S_\sq$. Since $u^* = \bellman u^*$ (by Proposition~\ref{prop:bvi-fixed-point} and $\bvi$ includes an application of $\bellman$), each action $a^\dagger \in \av'(s^\dagger)$ gives $u^*(s^\dagger) = \phi_{u^*}(s^\dagger,a^\dagger)$. However, this action $a^\dagger$ can be suboptimal for $\ell^*$, giving $\ell^*(s^\dagger) = \max_{a \in \av(s^\dagger)} \phi_{\ell^*}(s^\dagger,a) \geq \phi_{\ell^*}(s^\dagger,a^\dagger)$. So, $(u^*-\ell^*)(s^\dagger) \leq \phi_{u^*}(s^\dagger,a^\dagger)-\phi_{\ell^*}(s^\dagger,a^\dagger) = \phi_{(u^*-\ell^*)}(s^\dagger,a^\dagger) \leq^{(\star)} \max_{s' \in \post(s^\dagger,a^\dagger)}(u^*-\ell^*)(s') \leq (u^*-\ell^*)(s^\dagger)$, where the inequality $\leq^{(\star)}$ is due to the average is no greater than the maximum. Hence, by squeezing, $(u^*-\ell^*)(s^\dagger) = (u^*-\ell^*)(s')$ for all $s' \in \post(s^\dagger,a^\dagger)$. Therefore, we have $\post(s^\dagger,a^\dagger) \subseteq M$ for each $a^\dagger \in \av'(s^\dagger)$.

Now assume $s^\dagger \in S_\tri$. Since $\ell^* = \bellman \ell^*$ (by Proposition~\ref{prop:bvi-fixed-point} and $\bvi$ includes an application of $\bellman$), each action $a^\dagger \in \av'(s^\dagger)$ gives $\ell^*(s^\dagger) = \phi_{\ell^*}(s^\dagger,a^\dagger)$. However, this action $a^\dagger$ can be suboptimal for $u^*$, giving $u^*(s^\dagger) = \min_{a \in \av(s^\dagger)}\phi_{u^*}(s^\dagger,a) \leq \phi_{u^*}(s^\dagger,a^\dagger)$. So, $(u^*-\ell^*)(s^\dagger) \leq \phi_{u^*}(s^\dagger,a^\dagger) - \phi_{\ell^*}(s^\dagger,a^\dagger) \leq (u^*-\ell^*)(s^\dagger)$. Thus, by squeezing, $(u^*-\ell^*)(s^\dagger) = (u^*-\ell^*)(s')$ for all $s' \in \post(s^\dagger,a^\dagger)$. Therefore, we have $\post(s^\dagger,a^\dagger) \subseteq M$ for each $a^\dagger \in \av'(s^\dagger)$.

Combining both cases, we obtain $\forall s^\dagger \in M, \forall a^\dagger \in \av'(s^\dagger), \post(s^\dagger,a^\dagger) \subseteq M$. Since we restrict the available actions to $\av'$ for the subgame $\game'$, any infinite path in $\game'$ starting from $s^\dagger \in M$ remains in $M$. Hence, $M$ contains a state that is Even-winning in $\game'$ (i.e., $M \cap W'_\sq \neq \emptyset$) or $M$ contains a state that is Odd-winning in $\game'$ (i.e., $M \cap W'_\tri \neq \emptyset$) (or both). Without loss of generality, we assume the former case; a symmetric reasoning applies to the latter case.

Suppose $M \cap W'_\sq \neq \emptyset$. Because there is an infinite path, starting from $M \cap W'_\sq$ and staying in $M \cap W'_\sq$ under $\av'$, that makes Even wins, such path must end in some EC, i.e., there is an EC $E = (S_E,\av_E) \in \textsc{BMEC}(W'_\sq,\av')$ with $S_E \subseteq M$. By Lemma~\ref{lem:bmec-win}, $E$ is Even-dominating in both $\game'$ and $\game$.

We then consider the relationship between $\ell^*$ and $\exit_\tri(E,\ell^*)$. If Odd \emph{cannot} exit $E$ (i.e., $\av(s) \setminus \av'(s) = \emptyset$ for all $s \in S_E \cap S_\tri$), then $\exit_\tri(E,\ell^*) = 1$ by Definition~\ref{def:exit}. This implies $\ell^*(s) = 1$ for all $s \in S_E$, contradicting the assumption that $\ell^*(s) < u^*(s) \leq 1$ for all $s \in S_E \subseteq M$.

Otherwise, if Odd \emph{can} exit $E$, then the set $X := \{ s \in S_E \cap S_\tri : \av(s) \setminus \av'(s) \neq \emptyset\}$ is non-empty. Let $s^\flat := \arg\min_{s \in X} \ell^*(s)$ be a state in $X$ with smallest $\ell^*$. By the definition of $\av'$ and $\ell^* = \bellman \ell^*$, for any $s \in X \subseteq S_\tri$, $a' \in \av'(s)$, and $a \in \av(s) \setminus \av'(s)$, we have $\phi_{\ell^*}(s,a) > \phi_{\ell^*}(s,a') = \min_{a' \in \av(s)} \phi_{\ell^*}(s,a') = (\bellman\ell^*)(s) = \ell^*(s) \geq \ell^*(s^\flat)$. Thus, $\exit_\tri(E,\ell^*) > \ell^*(s^\flat)$, and Line~\ref{alg2:line9} modifies $\ell^*(s^\flat)$. This contradicts Proposition~\ref{prop:bvi-fixed-point} that $\ell^*$ is a fixed point of $\bvi$. \qed
\end{proof}

\begin{theorem}
Our proposed $\textnormal{\textsc{SPG-BVI}}(\game,\init,\varepsilon)$ surely terminates and returns a value that is $\varepsilon$-close to $\mathcal{V}(\init)$ for all inputs.
\end{theorem}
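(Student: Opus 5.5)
The theorem makes two claims: every output is correct, and the algorithm always terminates. I will prove them in that order, relying only on the results already established for the sequences $\ell_i$ and their duals for $u_i$.

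\emph{Correctness.} By Lemma~\ref{lem:below-V} and its dual, $\ell_i \leq \mathcal{V} \leq u_i$ holds for every $i$. This invariant holds at the end of each iteration of the while loop, since the loop body consists exactly of the updates analysed in those lemmas. Suppose the loop exits at iteration $i$. Then $\mathcal{V}(\init) \in [\ell_i(\init), u_i(\init)]$ and the interval has length at most $2\varepsilon$. Hence the midpoint $\frac{1}{2}(\ell_i(\init)+u_i(\init))$ is within $\varepsilon$ of every point of the interval, and in particular of $\mathcal{V}(\init)$. Note that this part does not need convergence at all.

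\emph{Termination.} Proposition~\ref{prop:increasing} and its dual show that $\ell_i$ is non-decreasing and $u_i$ is non-increasing. Both are bounded by Lemma~\ref{lem:below-V}, so the limits $\ell^*$ and $u^*$ of Definition~\ref{def:limit} exist. Lemma~\ref{lem:max-inherit} gives $\ell^* = u^*$. Combined with $\ell^* \leq \mathcal{V} \leq u^*$, this yields $\ell^* = \mathcal{V} = u^*$, so $u_i(\init) - \ell_i(\init) \to 0$. Since $\varepsilon > 0$, there is a finite $i$ with $u_i(\init) - \ell_i(\init) \leq 2\varepsilon$, at which point the loop exits.

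One point to check is that the sequences $\ell_i, u_i$ in Definition~\ref{def:limit} are exactly those produced by the loop, with no dependence on $\varepsilon$ beyond the stopping check. This holds because the loop body never reads $\varepsilon$. Consequently the infinite run of the loop body, on which the limit argument is based, is well defined, and the actual execution is a prefix of it.

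\emph{Finite work per iteration.} Each iteration must itself terminate. Computing $\bellman$, the restricted actions $\av'$, and the exit values is finite. \textsc{EvenWinSet} and \textsc{OddWinSet} terminate by Algorithm~\ref{alg:qualitative}, which recurses on strictly smaller games, with attractors computable as finite fixed points. Finally, $\textsc{BMEC}$ is computable in polynomial time.

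\emph{Main obstacle.} The substantive step is $\ell^* = u^*$, which is Lemma~\ref{lem:max-inherit}. It in turn rests on the continuity claim in Proposition~\ref{prop:bvi-continuous}, namely that $\av'$ stabilises near the limit. Since both are already established, the theorem reduces to assembling these facts.
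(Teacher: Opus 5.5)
Your proposal is correct and follows the paper's proof, which likewise combines Lemma~\ref{lem:below-V} and its dual with Lemma~\ref{lem:max-inherit}. The first gives the sandwich $\ell_i \le \mathcal{V} \le u_i$, so the midpoint is $\varepsilon$-close; the second gives $\ell^* = u^*$, so the loop terminates. You only make explicit what the paper leaves implicit: that the monotone limits exist, that the iterates do not depend on $\varepsilon$, and that each iteration terminates.
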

\begin{proof}
Follows immediately from Lemma~\ref{lem:below-V} (and its dual) and Lemma~\ref{lem:max-inherit}. \qed
\end{proof}

\subsection{Comparison with \cite{DBLP:conf/lics/KretinskyMW23}}\label{subsec:lics23}

The structure of our proposed Algorithm~\ref{alg:ec-spg-bvi} may appear similar to that of \cite[Algorithm 2]{DBLP:conf/lics/KretinskyMW23}; however, the underlying ideas differ fundamentally. We highlight the key distinctions below.

Firstly, \cite{DBLP:conf/lics/KretinskyMW23} relies on the existence of a \emph{strategy recommender} \cite[Definition 1]{DBLP:conf/lics/KretinskyMW23}. A strategy recommender is a procedure that generates a sequence $(\sigma_{\sq,i},\sigma_{\tri,i})$ such that, for some $i \in \mathbb{N}$, the pair is optimal. Such recommenders can be constructed for a range of objectives, including reachability, safety, total reward, and mean-payoff, since these objectives can be characterized as least or greatest fixed points of suitable operators. However, this approach does not extend to parity objectives. In contrast, our algorithm and correctness proofs do not rely on a strategy recommender and depend solely on restricted actions $\av'$.

Secondly, \cite{DBLP:conf/lics/KretinskyMW23} adjusts the two bound sequences using reasoning based on MDPs. For example, in constructing the bottom-up sequence $\ell_i$, they fix a strategy for Maximizer, thereby inducing an MDP controlled only by Minimizer, and compute an optimal response for Minimizer. Since the fixed Maximizer's strategy may be suboptimal, the resulting winning probabilities can be smaller, yielding a sound lower bound. By leveraging a strategy recommender, these fixed strategies eventually become optimal, ensuring convergence of the bounds.

In contrast, rather than relying on MDP-based reasoning, we solve the qualitative SPG problem (Problem~\ref{prob:qual-prob}) to determine the winning sets of both players within the restricted subgame $\game'$. By systematically refining the bounds of these winning sets, we establish convergence of our algorithm without requiring a strategy recommender. From this perspective, our approach can be viewed as a generalization of \cite{DBLP:conf/lics/KretinskyMW23}.

\section{Conclusions and Future Work}\label{sec:conclusion}

We introduced the first BVI algorithm for SPGs, enabling the computation of winning probabilities under parity objectives with precision guarantees. A central difficulty is that, unlike reachability or safety objectives, its value function is not characterized as either the least or greatest fixed point of the Bellman operator. Our technique resolves this by combining qualitative analysis with quantitative updates. The resulting algorithm is sound and converges with guarantees.

As future work, we aim to extend the framework to richer objectives, particularly quantitative rewards. We also plan to develop a deeper lattice-theoretic foundation for the underlying procedure, with the goal of characterizing fixed point uniqueness and enabling systematic generalizations of the approach.

\begin{credits}
\subsubsection{\ackname} 
This work is supported by the ASPIRE grant No. JPMJAP2301, JST.
\end{credits}

\pagebreak
\bibliographystyle{splncs04}
\bibliography{ref}

@inproceedings{DBLP:conf/fossacs/Zielonka04,
  author       = {Wieslaw Zielonka},
  title        = {Perfect-Information Stochastic Parity Games},
  booktitle    = {{FoSSaCS 2004}},
  series       = {LNCS},
  volume       = {2987},
  pages        = {499--513},
  publisher    = {Springer},
  year         = {2004}
}

@article{DBLP:journals/tac/CourcoubetisY98,
  author       = {Costas Courcoubetis and
                  Mihalis Yannakakis},
  title        = {Markov decision processes and regular events},
  journal      = {{IEEE} Trans. Autom. Control.},
  volume       = {43},
  number       = {10},
  pages        = {1399--1418},
  year         = {1998}
}

@phdthesis{DBLP:phd/us/Alfaro97,
  author       = {Luca de Alfaro},
  title        = {Formal verification of probabilistic systems},
  school       = {Stanford University, {USA}},
  year         = {1997}
}

@inproceedings{DBLP:conf/soda/ChatterjeeJH04,
  author       = {Krishnendu Chatterjee and
                  Marcin Jurdzinski and
                  Thomas A. Henzinger},
  title        = {Quantitative stochastic parity games},
  booktitle    = {{SODA 2004}},
  pages        = {121--130},
  publisher    = {{SIAM}},
  year         = {2004}
}

@inproceedings{DBLP:conf/csl/ChatterjeeJH03,
  author       = {Krishnendu Chatterjee and
                  Marcin Jurdzinski and
                  Thomas A. Henzinger},
  title        = {Simple Stochastic Parity Games},
  booktitle    = {{CSL 2003}},
  series       = {LNCS},
  volume       = {2803},
  pages        = {100--113},
  publisher    = {Springer},
  year         = {2003}
}

@inproceedings{DBLP:conf/isaac/AnderssonM09,
  author       = {Daniel Andersson and
                  Peter Bro Miltersen},
  title        = {The Complexity of Solving Stochastic Games on Graphs},
  booktitle    = {{ISAAC 2009}},
  series       = {LNCS},
  volume       = {5878},
  pages        = {112--121},
  publisher    = {Springer},
  year         = {2009}
}

@inproceedings{DBLP:conf/concur/BerthonKZ25,
  author       = {Rapha{\"{e}}l Berthon and
                  Joost{-}Pieter Katoen and
                  Zihan Zhou},
  title        = {A Direct Reduction from Stochastic Parity Games to Simple Stochastic Games},
  booktitle    = {{CONCUR 2025}},
  series       = {LIPIcs},
  volume       = {348},
  pages        = {9:1--9:21},
  publisher    = {Schloss Dagstuhl - Leibniz-Zentrum f{\"{u}}r Informatik},
  year         = {2025}
}

@inproceedings{DBLP:conf/stacs/ChatterjeeH06,
  author       = {Krishnendu Chatterjee and
                  Thomas A. Henzinger},
  title        = {Strategy Improvement and Randomized Subexponential Algorithms for Stochastic Parity Games},
  booktitle    = {{STACS 2006}},
  series       = {LNCS},
  volume       = {3884},
  pages        = {512--523},
  publisher    = {Springer},
  year         = {2006}
}

@inproceedings{DBLP:conf/cav/HahnSTZ16,
  author       = {Ernst Moritz Hahn and
                  Sven Schewe and
                  Andrea Turrini and
                  Lijun Zhang},
  title        = {A Simple Algorithm for Solving Qualitative Probabilistic Parity Games},
  booktitle    = {{CAV 2016}},
  series       = {LNCS},
  volume       = {9780},
  pages        = {291--311},
  publisher    = {Springer},
  year         = {2016}
}

@inproceedings{DBLP:conf/vmcai/HahnST017,
  author       = {Ernst Moritz Hahn and
                  Sven Schewe and
                  Andrea Turrini and
                  Lijun Zhang},
  title        = {Synthesising Strategy Improvement and Recursive Algorithms for Solving 2.5 Player Parity Games},
  booktitle    = {{VMCAI 2017}},
  series       = {LNCS},
  volume       = {10145},
  pages        = {266--287},
  publisher    = {Springer},
  year         = {2017}
}

@article{DBLP:journals/tcs/Zielonka98,
  author       = {Wieslaw Zielonka},
  title        = {Infinite Games on Finitely Coloured Graphs with Applications to Automata on Infinite Trees},
  journal      = {Theor. Comput. Sci.},
  volume       = {200},
  number       = {1-2},
  pages        = {135--183},
  year         = {1998}
}

@article{DBLP:journals/apal/McNaughton93,
  author       = {Robert McNaughton},
  title        = {Infinite Games Played on Finite Graphs},
  journal      = {Ann. Pure Appl. Logic},
  volume       = {65},
  number       = {2},
  pages        = {149--184},
  year         = {1993}
}

@article{DBLP:journals/siamcomp/CaludeJKLS22,
  author       = {Cristian S. Calude and
                  Sanjay Jain and
                  Bakhadyr Khoussainov and
                  Wei Li and
                  Frank Stephan},
  title        = {Deciding Parity Games in Quasi-polynomial Time},
  journal      = {{SIAM} J. Comput.},
  volume       = {51},
  number       = {2},
  pages        = {17--152},
  year         = {2022}
}

@inproceedings{DBLP:conf/lics/JurdzinskiL17,
  author       = {Marcin Jurdzinski and
                  Ranko Lazic},
  title        = {Succinct progress measures for solving parity games},
  booktitle    = {{LICS 2017}},
  pages        = {1--9},
  publisher    = {{IEEE} Computer Society},
  year         = {2017}
}

@inproceedings{DBLP:conf/mfcs/Parys19,
  author       = {Pawel Parys},
  title        = {Parity Games: Zielonka's Algorithm in Quasi-Polynomial Time},
  booktitle    = {{MFCS 2019}},
  series       = {LIPIcs},
  volume       = {138},
  pages        = {10:1--10:13},
  publisher    = {Schloss Dagstuhl - Leibniz-Zentrum f{\"{u}}r Informatik},
  year         = {2019}
}

@inproceedings{DBLP:conf/lics/Lehtinen18,
  author       = {Karoliina Lehtinen},
  title        = {A modal {\(\mu\)} perspective on solving parity games in quasi-polynomial time},
  booktitle    = {{LICS 2018}},
  pages        = {639--648},
  publisher    = {{ACM}},
  year         = {2018}
}

@article{DBLP:journals/jcss/AlfaroM04,
  author       = {Luca de Alfaro and
                  Rupak Majumdar},
  title        = {Quantitative solution of omega-regular games},
  journal      = {J. Comput. Syst. Sci.},
  volume       = {68},
  number       = {2},
  pages        = {374--397},
  year         = {2004}
}

@article{DBLP:journals/jsyml/Martin98,
  author       = {Donald A. Martin},
  title        = {The Determinacy of {Blackwell} Games},
  journal      = {J. Symb. Log.},
  volume       = {63},
  number       = {4},
  pages        = {1565--1581},
  year         = {1998}
}

@article{DBLP:journals/siamcomp/JurdzinskiPZ08,
  author       = {Marcin Jurdzinski and
                  Mike Paterson and
                  Uri Zwick},
  title        = {A Deterministic Subexponential Algorithm for Solving Parity Games},
  journal      = {{SIAM} J. Comput.},
  volume       = {38},
  number       = {4},
  pages        = {1519--1532},
  year         = {2008}
}

@inproceedings{DBLP:conf/atva/FriedmannL09,
  author       = {Oliver Friedmann and
                  Martin Lange},
  title        = {Solving Parity Games in Practice},
  booktitle    = {{ATVA 2009}},
  series       = {LNCS},
  volume       = {5799},
  pages        = {182--196},
  publisher    = {Springer},
  year         = {2009}
}

@inproceedings{DBLP:conf/cav/KelmendiKKW18,
  author       = {Edon Kelmendi and
                  Julia Kr{\"{a}}mer and
                  Jan Kret{\'{\i}}nsk{\'{y}} and
                  Maximilian Weininger},
  title        = {Value Iteration for Simple Stochastic Games{:} Stopping Criterion and Learning Algorithm},
  booktitle    = {{CAV 2018}},
  series       = {LNCS},
  volume       = {10981},
  pages        = {623--642},
  publisher    = {Springer},
  year         = {2018}
}

@inproceedings{DBLP:conf/lics/KretinskyMW23,
  author       = {Jan Kret{\'{\i}}nsk{\'{y}} and
                  Tobias Meggendorfer and
                  Maximilian Weininger},
  title        = {Stopping Criteria for Value Iteration on Stochastic Games with Quantitative Objectives},
  booktitle    = {{LICS 2023}},
  pages        = {1--14},
  publisher    = {{IEEE}},
  year         = {2023}
}

@inproceedings{DBLP:conf/cav/MeggendorferW24,
  author       = {Tobias Meggendorfer and
                  Maximilian Weininger},
  title        = {Playing Games with Your {PET:} Extending the Partial Exploration Tool to Stochastic Games},
  booktitle    = {{CAV 2024}},
  series       = {LNCS},
  volume       = {14683},
  pages        = {359--372},
  publisher    = {Springer},
  year         = {2024}
}

@inproceedings{DBLP:conf/atva/AzeemEKSW22,
  author       = {Muqsit Azeem and
                  Alexandros Evangelidis and
                  Jan Kret{\'{\i}}nsk{\'{y}} and
                  Alexander Slivinskiy and
                  Maximilian Weininger},
  title        = {Optimistic and Topological Value Iteration for Simple Stochastic Games},
  booktitle    = {{ATVA 2022}},
  series       = {LNCS},
  volume       = {13505},
  pages        = {285--302},
  publisher    = {Springer},
  year         = {2022}
}

@inproceedings{DBLP:conf/gandalf/AzeemKW25,
  author       = {Muqsit Azeem and
                  Jan Kret{\'{\i}}nsk{\'{y}} and
                  Maximilian Weininger},
  title        = {Sound Value Iteration for Simple Stochastic Games},
  booktitle    = {{GandALF 2025}},
  series       = {EPTCS},
  volume       = {428},
  pages        = {29--44},
  publisher    = {Open Publishing Association},
  year         = {2025}
}

@inproceedings{DBLP:conf/cav/PhalakarnTHH20,
  author       = {Kittiphon Phalakarn and
                  Toru Takisaka and
                  Thomas Haas and
                  Ichiro Hasuo},
  title        = {Widest Paths and Global Propagation in Bounded Value Iteration for Stochastic Games},
  booktitle    = {{CAV 2020}},
  series       = {LNCS},
  volume       = {12225},
  pages        = {349--371},
  publisher    = {Springer},
  year         = {2020}
}

@inproceedings{DBLP:conf/atva/PhalakarnTH25,
  author       = {Kittiphon Phalakarn and
                  Yun Chen Tsai and
                  Ichiro Hasuo},
  title        = {Widest Path Games and Maximality Inheritance in Bounded Value Iteration for Stochastic Games},
  booktitle    = {{ATVA 2025}},
  series       = {LNCS},
  volume       = {16145},
  pages        = {109--131},
  publisher    = {Springer},
  year         = {2025}
}

@inproceedings{DBLP:conf/vmcai/FuHLSSTZ22,
  author       = {Chen Fu and
                  Ernst Moritz Hahn and
                  Yong Li and
                  Sven Schewe and
                  Meng Sun and
                  Andrea Turrini and
                  Lijun Zhang},
  title        = {{EPMC} Gets Knowledge in Multi-agent Systems},
  booktitle    = {{VMCAI 2022}},
  series       = {LNCS},
  volume       = {13182},
  pages        = {93--107},
  publisher    = {Springer},
  year         = {2022}
}

@book{DBLP:books/daglib/0020348,
  author       = {Christel Baier and
                  Joost{-}Pieter Katoen},
  title        = {Principles of model checking},
  publisher    = {{MIT} Press},
  year         = {2008}
}

@article{DBLP:journals/dm/Baranga91,
  author       = {Andrei Baranga},
  title        = {The contraction principle as a particular case of {Kleene}'s fixed point theorem},
  journal      = {Discret. Math.},
  volume       = {98},
  number       = {1},
  pages        = {75--79},
  year         = {1991}
}

@article{DBLP:journals/tcs/HaddadM18,
  author       = {Serge Haddad and
                  Benjamin Monmege},
  title        = {Interval iteration algorithm for {MDPs} and {IMDPs}},
  journal      = {Theor. Comput. Sci.},
  volume       = {735},
  pages        = {111--131},
  year         = {2018}
}

@article{DBLP:journals/ipl/ChatterjeeH08,
  author       = {Krishnendu Chatterjee and
                  Thomas A. Henzinger},
  title        = {Reduction of stochastic parity to stochastic mean-payoff games},
  journal      = {Inf. Process. Lett.},
  volume       = {106},
  number       = {1},
  pages        = {1--7},
  year         = {2008}
}

@article{DBLP:journals/iandc/Condon92,
  author       = {Anne Condon},
  title        = {The Complexity of Stochastic Games},
  journal      = {Inf. Comput.},
  volume       = {96},
  number       = {2},
  pages        = {203--224},
  year         = {1992}
}

@inproceedings{DBLP:conf/spin/ChatterjeeH08,
  author       = {Krishnendu Chatterjee and
                  Thomas A. Henzinger},
  title        = {Value Iteration},
  booktitle    = {25 Years of Model Checking},
  series       = {LNCS},
  volume       = {5000},
  pages        = {107--138},
  publisher    = {Springer},
  year         = {2008}
}

@article{DBLP:journals/ejcon/SvorenovaK16,
  author       = {Mar{\'{\i}}a Svorenov{\'{a}} and
                  Marta Kwiatkowska},
  title        = {Quantitative verification and strategy synthesis for stochastic games},
  journal      = {Eur. J. Control},
  volume       = {30},
  pages        = {15--30},
  year         = {2016}
}

@article{DBLP:journals/access/TusharYSNASP23,
  author       = {Wayes Tushar and
                  Chau Yuen and
                  Tapan Kumar Saha and
                  Sohrab Nizami and
                  Mollah R. Alam and
                  David B. Smith and
                  H. Vincent Poor},
  title        = {A Survey of Cyber-Physical Systems From a Game-Theoretic Perspective},
  journal      = {{IEEE} Access},
  volume       = {11},
  pages        = {9799--9834},
  year         = {2023}
}

@article{DBLP:journals/arcras/MardenS18,
  author       = {Jason R. Marden and
                  Jeff S. Shamma},
  title        = {Game Theory and Control},
  journal      = {Annu. Rev. Control. Robotics Auton. Syst.},
  volume       = {1},
  pages        = {105--134},
  year         = {2018}
}

@inproceedings{DBLP:conf/tacas/HartmannsJQW23,
  author       = {Arnd Hartmanns and
                  Sebastian Junges and
                  Tim Quatmann and
                  Maximilian Weininger},
  title        = {A Practitioner's Guide to {MDP} Model Checking Algorithms},
  booktitle    = {{TACAS 2023}},
  series       = {LNCS},
  volume       = {13993},
  pages        = {469--488},
  publisher    = {Springer},
  year         = {2023}
}

@article{revised,
  author       = {Arnd Hartmanns and
                  Sebastian Junges and
                  Tim Quatmann and
                  Maximilian Weininger},
  title        = {The Revised Practitioner's Guide to {MDP} Model Checking Algorithms},
  journal      = {Int. J. Softw. Tools Technol. Transfer},
  year         = {2026}
}

@article{DBLP:journals/iandc/EisentrautKKW22,
  author       = {Julia Eisentraut and
                  Edon Kelmendi and
                  Jan Kret{\'{\i}}nsk{\'{y}} and
                  Maximilian Weininger},
  title        = {Value iteration for simple stochastic games: Stopping criterion and
                  learning algorithm},
  journal      = {Inf. Comput.},
  volume       = {285},
  number       = {Part},
  pages        = {104886},
  year         = {2022}
}

\end{document}